\documentclass[12pt,titlepage]{article}

\usepackage[hyphens]{url}
\usepackage{lipsum}
\usepackage{graphicx,color}

\usepackage{todonotes}

\usepackage[utf8]{inputenc}
\usepackage[T1]{fontenc}
\usepackage[cmex10]{amsmath}
\usepackage{amsthm}
\usepackage{amssymb}
\usepackage{mathabx}
\usepackage{setspace}
\usepackage[OT1]{fontenc}
\usepackage{tikz}
\usetikzlibrary{arrows}
\usepackage{caption}
\usetikzlibrary{matrix}
\usepackage{amsmath}
\usepackage{float}

\usepackage{subcaption}

\onehalfspacing

\makeatletter
\renewcommand\paragraph{\@startsection{paragraph}{4}{\z@}%
  {-3.25ex \@plus -1ex \@minus -0.2ex}%
  {0.01pt}%
  {\normalfont\bfseries\nobreak}%
}
\makeatother

\def\one{\mathbf{1}}



\def\one{\mathbf{1}}

\def\Re{\mathbf{R}}
\def\os{\emptyset}

\newcommand{\df}[1]{\textbf{\textit{#1}}}

\def\succsim{\succeq}
\def\kpref{\trianglerighteq}
\def\ksim{\thickapprox}

\theoremstyle{plain}
\newtheorem{theorem}{Theorem}
\newtheorem{lemma}[theorem]{Lemma}
\newtheorem{proposition}[theorem]{Proposition}
\newtheorem{corollary}[theorem]{Corollary}

\theoremstyle{definition}
\newtheorem{definition}{Definition}
\newtheorem{example}{Example}

\theoremstyle{remark}
\newtheorem*{remark}{Remark}

\sloppy

\usepackage[
    backend=biber,natbib,
    style=authoryear,
    citetracker=true,
    maxcitenames=1
]{biblatex}

\AtEveryCitekey{\ifciteseen{}{\defcounter{maxnames}{99}}}

\addbibresource{ec.bib}

\begin{document}

\title{A Mathematical Theory of Classifiers;\\ Representations and Applications\thanks{
The first author gratefully acknowledges support
from Beyond Limits (Learning Optimal Models) through CAST (The Caltech Center for Autonomous Systems and Technologies). The second author gratefully acknowledges support from the National Science Foundation under grant number SES-1558757. We thank Chris Chambers for many comments and suggestions. We would like to thank our reviewers for their constructive feedback.
}}

\author{Hamed Hamze Bajgiran \\ Caltech \and Federico Echenique \\
  Caltech }

\maketitle

\begin{abstract}
We study the complexity of closure operators, with applications to machine learning and decision theory. In machine learning, closure operators emerge naturally in data classification and clustering. In decision theory, they can model equivalence of choice menus, and therefore situations with a preference for flexibility. Our contribution is to formulate a notion of complexity of closure operators, which translate into the complexity of a classifier in ML, or of a utility function in decision theory. 
\end{abstract}

\section{Introduction}

We study the complexity of closure operators, with applications to machine learning and decision theory in economics. Closure operators can be used to model a coarsening of reality. For example, the set of all 16 subsets of the set $\{a,b,c,d\}$ may be viewed as a fine-grained description of some physical system. A coarser view could regard some subsets as equivalent, just like in the usual topology on the real line, the rational numbers are ``indistinguishable'' from the real numbers in the sense that their closure equals the reals. For example, all subsets of $\{a,c\}$ in $\{a,b,c,d\}$ could be seen as equivalent, which a closure operator may capture by mapping the sets $\{a\}$, $\{c\}$, and $\{a,c\}$ into $\{a,c\}$. 

Such coarsening of reality emerges naturally in machine learning and in decision theory. In machine learning, a \textit{classifier} will decide that some individual data-points should be lumped together. For example, the classifier can be applied to images of household pets, and classify them according to species, without regard for color or size. An Irish Setter is  indistinguishable from a Dachshund: both are mapped into the class of dogs; while a canary and a parrot may be seen as distinct from any dogs, but grouped together as birds.

In economic decision theory, an agent may be choosing a menu, or a store, from which to make an ultimate choice. Someone who is in the market for a  particular item, say a Canon 5D Mark III,  will regard any photography store that carries this model as equivalent: one store indistinguishable from another as long as they have what the consumer is looking for. An agent choosing items in  $\{a,b,c,d\}$, with preferences $a\succ b\succ c\succ d$, will consider all menus that contain $a$ as indifferent; any menu that contains $b$ but not $a$ is also indifferent, and so on. 

Closure operators are ideally suited to analyze these situations. In particular, we shall see that closure operators can capture a desire for flexibility stemming from agents who are uncertain about their preferences. If our agent is unsure of whether they rank $a$ over $b$, or the other way around, they may have a strict preference for a larger menu containing both options. For example they may prefer $\{a\}$ over $\{b\}$ (as menus), but strictly prefer $\{a,b\}$ over $\{a\}$. A branch of decision theory (one that started with the seminal paper of \cite{kreps}) relies on closure operators as a modeling device. In particular, the work of \cite*{DLR} uses the operator that maps a set $A\subseteq\Re^n$ to its convex hull; and their key axiom is that agents are indifferent between $A$ and its convex hull.

Our starting point is a representation theorem for closure operators. We shall see that a closure operator may be built up from more basic, or simpler, operators. In machine learning, it is natural to think of a classifier as resulting from the composition of simple linear classifiers. We show that a similar representation is possible, even in discrete settings where there is no language for talking about linearity. In decision theory, some agents may have complex preferences over menus that result from a desire for flexibility in choice. We shall see that such complex preferences can be constructed from  simpler operators that do not exhibit such a desire for flexibility. 
 
Our representation theorem contains those of \cite*{richterandrubinstein} and \cite*{chambers2020closure} as special cases, and relies on similar ideas. The main novelty in our work lies in connecting the topologies arising from a closure operator, to those arising from its simplest constituent operators; and in using this connection to formulate a notion of complexity.  It is indeed natural in machine learning to identify the complexity of a classifier with the number of different simple classifiers needed to represent it: our paper formalizes this connection in the abstract model of discrete data. 

For decision theory, we shall see that the complexity of an operator is tied to the size of the (subjective) state-space needed to represent an agent's choices. As we wrote above, a preference for flexibility emerges from uncertainty about future preferences. Such uncertainty can be represented with a state space that is obtained (as in \cite{kreps}) from the relevant operator. Indeed, \cite*{DLR} write:
\begin{quote}
``the size of the subjective state space provides a measure of the agent's uncertainty about future contingencies.''
\end{quote}
We show how to extend the approach in Dekel et.\ al, so that if their key axiom is applied to an arbitrary closure operator, even in the discrete setting of our paper, a representation with a preference for flexibility may be obtained. In this representation, we can bound the size of the state-space through the representation of a closure operator in terms of simpler constituents.

\section{Literature Review}
Closure operators and their connections to topologies are a subject of study in mathematics (\cite{ward}), but the literature that is closest to our work pertains to applications in economics and computer science.  Some of these applications are related to convex geometries and their representations.

The basic concepts of abstract convex geometry and combinatorial convex hull operator are developed in \cite{edelmanandjamison}. Convex hull operators are a class of closure operators with an extra \emph{anti-exchange} property\footnote{We say that $f$ satisfies the anti-exchange property if given any closed set $A$ and two unequal points $x,y\in X\setminus A$, then $x\in f(A\cup y)$ implies that $y\notin f(A\cup x)$. }. In decision theory, \cite{koshevoy} studies the connection between the combinatorial convex hull operators and the path-independent choice functions.  The closest papers to ours are \cite{richterandrubinstein} and \cite{chambers2020closure}. \cite{richterandrubinstein} provide a characterization of a combinatorial convex hull operator through a set of primitive orderings. Using their representation, \cite{richterandrubinstein} propose a notion of competitive equilibrium in an abstract environment. \cite{chambers2020closure} extends the result of \cite{richterandrubinstein} and relates it to the \cite{kreps} model of preferences for flexibility in decision theory. They provide the main characterization of closure operators by weak orders.  

In the context of dynamic choice, and preferences for flexibility, following \cite{kreps}, \cite*{DLR}, and \cite*{GP}, a rather extensive literature in economics studies different aspects of choice over menus. The closest related papers are those of \cite{nehringspuppe1999}, \cite{kopylov1}, \cite{kopylov2}, and \cite{gorno}.  \cite{nehringspuppe1999} provides a more refined Pareto representation of the Krepsian model. They provide the connection to the representation by \cite{aizermanandmalishevski} of choice correspondences satisfying the heritage and outcast properties, which is another route for proving Claim 1 in \cite{richterandrubinstein}; that every convex geometry is generated by a set of primitive ordering.\footnote{In a nutshell, the heritage property states that if $A\subseteq B$, then $f(A)\supseteq f(B)\cap A$. When imposed on a choice function, this property is the $\alpha$ property of \cite{sen}. The outcast property states that if $f(A)\subseteq B\subseteq A$, then $f(A)=f(B)$. This property is a weaker form of property $\beta$ of \cite{sen}. Note, however, that our paper takes closure operators as primitive, not choice.} \cite{kopylov1} determines the number of positive and negative states in the setting of \cite*{DLR}. \cite{gorno} shows that any preference ordering in Kreps' setting has a representation as a Dekel-Lipman-Rustichini representation. Finally, \cite{kopylov2} proposes a combinatorial model of subjective states. By relaxing the axioms of Kreps, he presents a weaker model of coherent aggregation. 

Our application to classification problems in machine learning is related to the link between closure operators and Galois connections, and their applications in formal concept theory. Generally, there are different applications of lattice theory for studying general hierarchical concepts. For an overview of representations and applications to closure systems and formal concept theory see \cite{Caspard2003}, \cite{Domenach2004}, \cite{Bertet2018}, and \cite{Davey}. In our paper, we focus on a simple model with a new complexity notion, its decision-theoretic and conceptual interpretations, and how it may be  computed efficiently.

\section{Model and preliminaries}

\subsection{Preliminary definitions and notational conventions}

Let $X$ be a set. We denote the set of all subsets of $X$ by $2^X$. A binary relation on $X$ is 1) a \df{weak order} on $X$ if it is complete and transitive (meaning that it orders all elements of $X$); 2) a \df{partial order} if it is reflexive, transitive and anti-symmetric ($\forall x,y\in X$ if $x\succsim y$ and $y\succsim x$ then $x=y$); 3) a \df{total order} if it is  a complete partial order. 

Let $\succsim$ be a partial order on $X$. We say that two elements
$x,y$ with $x\succsim y$ or $y\succsim x$ are \df{ordered}, or
\df{comparable}, by $\succsim$. A subset $Y$ of $X$ is a \df{chain} if
it is totally ordered by $\succsim$, and an \df{anti-chain} if no two
of its elements are ordered by $\succsim$.  An element $x \in X$ is
said to be an \df{upper bound} of $Y$ if  $x\succsim a$ for every $a
\in Y$. We denote the \df{least upper bound}, or \df{join},  of $Y$
(if it exists) by $\bigvee Y$. Similarly, we denote the \df{greatest
  lower bound} or \df{meet} of $Y$ by $\bigwedge Y$. The partially
ordered set $X$ is a \df{lattice} if, for all $x,y\in X$, the set
$\{x,y\}$ has a join and a meet: denoted by $x\vee y$ and $x\wedge y$ respectively.

We introduce some concepts from discrete geometry that may be viewed as analogues to concepts in convex analysis: see \cite{richterandrubinstein} for an earlier application of these ideas to economics. Denote the set of all weak orders on $X$ by $\mathcal{R}$. The
\df{support function} of $A\subseteq X$ is the function
$h_A:\mathcal{R} \to 2^X$ defined by $h_A(\succsim)=\{x\in A|\ x\succsim
y\ \forall y\in A\}$. Similarly, we may define the \df{support
  half-space} of $A\neq\os$ with respect to $\succeq\in \mathcal{R}$
by $H(\succsim,h_A)=\{x\in X|\ h_A(\succsim)\succsim x\}$. By definition, we set $H(\succsim,h_{\emptyset})=\emptyset$.

Observe the analogy with convex analysis. Here the set $\mathcal{R}$ serves the same
role as the dual of $X$ (the set of continuous linear functionals),
when $X$ is a Euclidean space. 

One final notational convention is that we denote the indicator
function for a set $C\subseteq X$ by $\one_C$.  So $\one_C(x)=1$ if
$x\in C$ and $\one_C(x)=0$ otherwise.

\subsection{Closure operators}\label{sec:defnclosureoperator}
The paper is a study of a special kind of functions defined on the subsets of a finite set $X$; functions of the form $f:2^X \to 2^X$ that have the properties of closure operators in topology \citep{kuratowski2014topology}:

\begin{definition}
A \textbf{\emph{closure operator}} on $X$ is a map $f:2^X \to 2^X$ that satisfies the following properties:
\begin{enumerate}
    \item \textbf{\emph{Extensivity}}: $A\subseteq f(A)$ and $f(\emptyset)=\emptyset$.
    \item \textbf{\emph{Idempotence}}: $f(f(A))=f(A)$.
    \item \textbf{\emph{Monotonicity}}: $A\subseteq B $ implies $f(A)\subseteq f(B)$.\footnote{Kuratowski imposes $f(A\cup B)=f(A)\cup f(B)$, a stronger property than monotonicity.}
\end{enumerate}
\end{definition}

As applications of our theory, we shall discuss in detail two  concrete interpretations of closure operators below, but in the abstract one may think of the operator as a model of imperfect perception. The closure operator describes a coarse perception of reality: one that does not distinguish between $A$ and $f(A)$. The use of closure operators in topology follows this interpretation, as $f(A)$ consists of the points that are arbitrarily close to the elements of $A$.


The following are examples of closure operators:
\begin{enumerate}
\item The \df{identity operator} is the closure  operator $I:2^X\to2^X$ such that $I(A)=A$ for every $A\in 2^X$.
\item The \df{trivial closure operator} is defined as the closure operator $f:2^X\to2^X $ such that $f(A)=X$ for every nonempty $A\in 2^X$.
\item\label{it:binclass} A \df{binary classifier} is an operator $f_C$, associated to a set $C\subseteq X$. So that $f_C(\os)=\os$, and for nonempty $A$, $f_C(A)=C$ if $A\subseteq C$ and $f_C(A)=X$ otherwise.
\item\label{it:stratrat} Given a function $u:X\to\Re$, \[
f_u(A) = \{x\in X:u(x)\leq \max \{u(x'):x'\in A \} \}
  \] is the \df{strategically rational} operator defined from the function $u$.
 \end{enumerate}

Of the preceding examples, binary classifiers~\eqref{it:binclass} and strategically rational~\eqref{it:stratrat} operators will play an important role in the paper. It is worth discussing these two classes of closure operators in some detail before proceeding.

First note that a binary classifier gives rise to closed sets $S(f_C)=\{\os,C,X \}$, 
which are the smallest kind of non-trivial topology possible. In our paper, we shall think of these as simple classifiers. Binary classifiers are, moreover, a special case of strategically rational operators: Indeed for a given $C\subseteq X$ we may define $u=\one_{X\setminus C}$, and observe that $f_C=f_u$.

Turning to strategically rational operators, one may interpret $u:X\to\Re$ as a utility function, and $\max \{u(x):x\in A \}$ as the best utility achievable from a set $A\subseteq X$ of possible choices. Then $f_u(A)$ is the largest set of choices that an agent with utility $u$ would consider to be as good as choosing from $A$. In particular, if $f_u(A)=f_u(B)$ then the agent is equally happy choosing an item from $A$ as from $B$. 

The ``strategic rationality'' terminology is borrowed from \cite{kreps}, and will be useful when we talk about applications to decision theory. It suggests an agent who is forward looking, and identifies a menu with the element they plan to choose from the menu. Alternatively, strategically rational operators can be seen as abstract counterparts to the simple linear classifiers used in machine learning: we shall also emphasize this interpretation when we talk about applications to machine learning.

Finally note that all that matters about $u$ in the definition of $f_u$ is the weak order represented by $u$; that is,  $x\succeq y$ iff $u(x)\geq u(y)$. In other words, if $u=h\circ v$, for a strictly monotone increasing function $h:\Re\to\Re$, then $f_u=f_v$.  In the rest of the paper, we shall therefore write $f_\succeq$ for strategically rational closure operators, where it will be understood that $\succeq$ is a weak order over $X$. Observe that these coincide with the supporting half-spaces we introduced above.


\begin{definition}
A set $A\subseteq X$ is \df{closed} with respect to a closure operator $f:2^X \to 2^X$, if $A=f(A)$. The set $f(2^X)$ of all closed sets with respect to the closure operator $f$ is the \df{topology} defined by $f$, and is denoted by  $S(f)$.
\end{definition}

The terminology of closed sets and topology is justified by the following well-known result, which we state as Lemma~\ref{lem:closureintersection}.\footnote{In the context of convex geometry, which is most relevant for our paper, the result has been noticed in \cite{edelmanandjamison}. But the result is well known; see, for example, \cite{ward} and \cite{kuratowski2014topology}.}

\begin{lemma}\label{lem:closureintersection}
  Let $f:2^X\to 2^X$ be a closure operator on $X$, then the set of closed sets $S(f)$ is closed under intersection and contains $\emptyset$ and $X$. Indeed, $S(f)$ endowed with the meet and join operators $A\wedge B=A\cap B$ and  $A\vee B=f(A\cup B)$ is a lattice that has  $\emptyset$ and $X$ as its (respectively) smallest and largest elements. 

  Moreover, if $S$ is any subset of $2^X$ that is closed under intersection and contains $\emptyset$ and $X$, then there is a unique closure operator $f_S:2^X\to 2^X$ such that $S(f_S)=S$.
\end{lemma}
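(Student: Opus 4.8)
The plan is to prove the two halves of Lemma~\ref{lem:closureintersection} in turn. For the first half, suppose $\{A_i\}_{i\in I}\subseteq S(f)$ and set $A=\bigcap_{i\in I}A_i$. Since $A\subseteq A_i$ for each $i$, monotonicity gives $f(A)\subseteq f(A_i)=A_i$, hence $f(A)\subseteq\bigcap_i A_i=A$; combined with extensivity ($A\subseteq f(A)$) this yields $f(A)=A$, so $A\in S(f)$. That $\emptyset\in S(f)$ is exactly the second clause of extensivity, and $X\in S(f)$ follows from extensivity ($X\subseteq f(X)\subseteq X$). Next I would check the lattice structure: with $A\wedge B:=A\cap B$ and $A\vee B:=f(A\cup B)$, one verifies $A\cap B$ is the greatest lower bound of $\{A,B\}$ in $(S(f),\subseteq)$ (it is a lower bound, and any closed lower bound is contained in it), and $f(A\cup B)$ is the least upper bound: it is closed by idempotence, it contains $A$ and $B$ by extensivity and monotonicity, and if $C\in S(f)$ with $A,B\subseteq C$ then $A\cup B\subseteq C$, so $f(A\cup B)\subseteq f(C)=C$. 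Finally $\emptyset$ and $X$ are the least and greatest elements since every closed set lies between them.

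For the second half, let $S\subseteq 2^X$ be closed under (arbitrary, or at least finite --- but in a finite $X$ these coincide) intersection with $\emptyset,X\in S$. Define $f_S(A):=\bigcap\{B\in S: A\subseteq B\}$; this intersection is nonempty as a family since $X\in S$ and $A\subseteq X$, and it lies in $S$ by the intersection-closedness hypothesis. I would then verify the three closure axioms: extensivity is immediate from the definition (and $f_S(\emptyset)=\emptyset$ because $\emptyset\in S$); monotonicity follows because enlarging $A$ shrinks the family of supersets being intersected; idempotence follows because $f_S(A)\in S$ and contains $A$, so $f_S(A)$ is itself one of the sets in the family defining $f_S(f_S(A))$, forcing $f_S(f_S(A))\subseteq f_S(A)$, and extensivity gives the reverse inclusion. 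Next, $S(f_S)=S$: if $B\in S$ then $B$ is the smallest element of $\{B'\in S: B\subseteq B'\}$, so $f_S(B)=B$ and $B\in S(f_S)$; conversely if $f_S(B)=B$ then $B$ equals an intersection of members of $S$, hence $B\in S$.

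For uniqueness, suppose $f$ and $g$ are closure operators with $S(f)=S(g)=S$. For any $A$, $f(A)$ is closed and contains $A$, so $f(A)\supseteq f_S(A)$ by minimality of $f_S(A)$; but also $f_S(A)\in S=S(f)$ and $A\subseteq f_S(A)$, so monotonicity and idempotence give $f(A)=f(f_S(A))\cap\ldots$ --- more directly, $f(A)\subseteq f(f_S(A))=f_S(A)$ since $f_S(A)$ is $f$-closed. Hence $f(A)=f_S(A)$, and likewise $g=f_S$, so $f=g$. I do not anticipate a serious obstacle here; the only point requiring a little care is keeping the directions of the inclusions straight in the idempotence and uniqueness arguments, and making sure the defining intersection for $f_S$ is over a nonempty family (guaranteed by $X\in S$) so that $f_S$ is well defined on every subset, including $X$ itself.
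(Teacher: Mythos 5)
Your proof is correct: the intersection-closedness, the lattice verification with $A\wedge B=A\cap B$ and $A\vee B=f(A\cup B)$, the construction $f_S(A)=\bigcap\{B\in S: A\subseteq B\}$, and the uniqueness argument via minimality of $f_S(A)$ together with $f(A)\subseteq f(f_S(A))=f_S(A)$ are all sound. The paper itself does not prove this lemma (it is cited as well known, e.g.\ from Ward and Kuratowski), and your argument is exactly the standard one that those references supply, so there is nothing to flag beyond the minor slip of notation in the uniqueness paragraph, which you immediately correct.
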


\subsection{Application 1: A theory of classifiers}
Interpret the elements of $X$ as data, and suppose given a finite set
$L$ of \df{labels}.  A \df{labeling correspondence} is a set-valued function $\Phi : X\rightrightarrows L$ that associates with each data point $x\in X$ a subset of labels $\Phi(x)$.

Given a labeling correspondence $\Phi : X\rightrightarrows L$, we define a \df{classifier} as a function $f:2^X\to 2^ X$ with $f(A)=\{x| x\in X, \ \bigcap_{y\in A} \Phi(y) \subseteq \Phi(x)\}$ for every $A\subseteq X$.

The interpretation of a labeling correspondence is straightforward. It attaches a set of labels to each data point. We interpret each label as a single feature or property attached to each data point. Hence, attaching two different labels $l_1,l_2\in L$ to a data point $x\in X$, $\Phi(x)=\{l_1,l_2\}$, is interpreted as if the data point $x$ has both of those properties. 

To understand the definition of a classifier, assume that the classifier $f$, associated with a labeling correspondence $\Phi$. Given a data point $x\subseteq X$,   $\Phi(x)$ is the set of all labels associated with the point $x$. To find the set of data points in the same class (or category) as $x$, we need to consider all data points with at least all the labels of the data point $x$. This is precisely the definition of $f(x)$.  

More generally, for a given set of data points $A\subseteq X$, $f(A)$ is the set of all data points that at least have all the labels in common with all points in $A$. The idea is that if a decision-maker wants to find all data points that are in the same class as the observed data points in $A$ (without any other information), she should consider all points $f(A)$.

\begin{remark}\label{rem:classifierisclosureoperator} A classifier derived from a labeling function is a closure operator: The extensivity and monotonicity properties are simple to verify. To show idempotence, notice that by monotonicity $f(A)\subseteq f(f(A))$. Hence, we only need to show that $f(f(A))\subseteq f(A)$. Assume that $x\in f(f(A))$. By definition, $\bigcap_{y\in f(A)} \Phi(y) \subseteq \Phi(x)$. We know that for every $y\in f(A)$ we have  $\bigcap_{z\in A} \Phi(z) \subseteq \Phi(y)$. Hence, $\bigcap_{z\in A} \Phi(z) \subseteq \bigcap_{y\in f(A)} \Phi(y) \subseteq \Phi(x) $. Thus, $x\in f(A)$. As a result, $f$ satisfies the idempotence property.
\end{remark}

The ideas may be illustrated by means of an example.

\begin{example}\label{ex:classifierexample}
Consider a set of four data points $X=\{a, b, c, d\}$. The set of labels is defined as $L=\{\text{dog}, \text{cat}, \text{black}, \text{white}, \text{female}, \text{male}, \text{car}\}$. Assume that the labeling correspondence $\Phi:X\rightrightarrows L$ is as follows: 

\begin{align*}
    \Phi(a)&=\{\text{dog}, \text{black}, \text{female}\} \\
    \Phi(b)&=\{\text{dog}, \text{black}, \text{male}\} \\
    \Phi(c)&= \{\text{cat}, \text{white}, \text{female}\} \\ 
    \Phi(d)&=\{\text{car}, \text{black}\}
\end{align*}

The classifier associated with the above labeling correspondence has eight classes. Class1=\{a\} associated with the labels \{\text{dog}, \text{black}, \text{female}\}, Class2=\{b\} associated with the labels \{\text{dog}, \text{black}, \text{male}\}, Class3=\{c\} associated with the labels \{\text{cat}, \text{white}, \text{female}\}, Class4=\{d\} associated with labels \{\text{car}, \text{black}\}, Class5=\{a,b\} associated with the labels \{\text{dog}, \text{black}\}, Class6=\{a,b,d\} associated with the labels
\{\text{black}\}, Class7=\{a,c\} associated with the labels \{\text{female}\}, and the last class is Class8=\{a,b,c,d\} associated with all data points in the set $X$.


Figure \ref{fig:examplemain} depicts the structure of the classifier and the associated lattice structure.

\begin{figure}
\centering
\begin{tikzpicture}
\matrix (A)[matrix of math nodes, column sep=0.6cm, row sep=0.5cm]{
 &  & \boxed{\{a,b,c,d\}}\\
&\boxed{\{a,b,d\}}\\
\boxed{\{a,b\}}& & &\boxed{\{a,c\}}\\
\boxed{\{a\}}& \boxed{\{b\}} &\boxed{\{c\}} &\boxed{\{d\}}\\
& & \boxed{\emptyset}\\};

\foreach \i/\j in {1-3/2-2, 1-3/3-4, 2-2/4-4, 2-2/3-1, 3-1/4-1, 3-1/4-2, 3-4/4-3, 3-4/4-1, 4-1/5-3, 4-2/5-3, 4-3/5-3, 4-4/5-3}
    \draw[<-] (A-\i) -- (A-\j);

\end{tikzpicture}

\caption{The lattice associated with the labeling correspondence $\Phi$ in Example \ref{ex:classifierexample}.}
\label{fig:examplemain}
\end{figure}
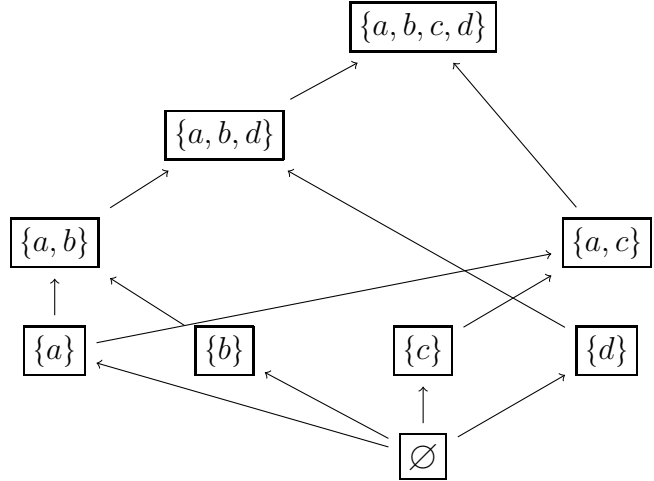


\end{example}

The topology $S(f)$, represents all possible different \df{classes}
of data points, and $f(A)$ is the smallest class that contains $A$. By
Lemma \ref{lem:closureintersection}, we know that the set
of classes is closed under intersections. Moreover, there is a
lattice structure associated with the labeling correspondence. Indeed,
we may interpret a classifier through the topology it induces: Let
$x\in X$ be in two classes $A, B \in S$, which means that it has the
properties of the classes $A$ and $B$. Then, there is a 
class $C\in S$, with the properties of the classes $A$ and $B$, such
that $x\in C$. The trivial class $X$ represents the set of all data
points, having every possible
label. 

These observations may be summed up in the following proposition.

\begin{proposition}\label{prop:labelingclassifier} Let $X$ be a set of data points. We have the followings:
\begin{enumerate}
    \item Let $\Phi:X\rightrightarrows L$ be a labeling correspondence. The classifier $f:2^X\to 2^X$ associated with $\Phi$ is a closure operator. 
    \item Let $f:X \to X$ be a closure operator on $X$. Then, there exists a set of labels $L$, and a labeling correspondence $\Phi: X\rightrightarrows L$ such that the classifier associated with $\Phi$ is $f$. Moreover, one choice of $L$ and $\Phi$ is achieved by defining $L=S(f)$ and $\Phi: X\rightrightarrows L$ with $\Phi(x)=\{s|\ s\in S(f), x\in s \}$.
\end{enumerate}
 \end{proposition}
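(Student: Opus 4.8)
The plan is to handle part~(1) quickly and then reduce part~(2) to the fact that $f(A)$ is the smallest closed set containing $A$. For part~(1), extensivity and monotonicity follow directly from the definition: if $A\subseteq B$ then $\bigcap_{y\in B}\Phi(y)\subseteq\bigcap_{y\in A}\Phi(y)$, so $f(A)\subseteq f(B)$; and each $y\in A$ satisfies $\bigcap_{z\in A}\Phi(z)\subseteq\Phi(y)$, so $A\subseteq f(A)$. Idempotence is exactly the computation already carried out in Remark~\ref{rem:classifierisclosureoperator}. The one point requiring care is $f(\emptyset)=\emptyset$: under the convention $\bigcap_{y\in\emptyset}\Phi(y)=L$ this asks that no data point carry every label, which I would either record as a standing assumption on $\Phi$ or simply observe is automatic for the canonical $\Phi$ constructed in part~(2), since $\emptyset\in S(f)$ is then a label that no $x$ possesses.

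For part~(2), fix a closure operator $f$, set $L=S(f)$ (finite because $X$ is), and $\Phi(x)=\{s\in S(f):x\in s\}$; let $g$ be the classifier associated with this $\Phi$, so that I must show $g=f$. The key observation is the translation
\[
\bigcap_{y\in A}\Phi(y)=\{s\in S(f):A\subseteq s\},
\]
the family of all closed sets containing $A$. Hence for $x\in X$ the defining condition $\bigcap_{y\in A}\Phi(y)\subseteq\Phi(x)$ says exactly that $x$ belongs to every closed set containing $A$; that is, $g(A)=\bigcap\{s\in S(f):A\subseteq s\}$. It then remains to identify this intersection with $f(A)$: by extensivity and idempotence $f(A)$ is itself a closed set containing $A$, which gives $g(A)\subseteq f(A)$; conversely, if $s$ is closed with $A\subseteq s$, then monotonicity and idempotence give $f(A)\subseteq f(s)=s$, so $f(A)$ lies in the intersection and $f(A)\subseteq g(A)$. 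Thus $g(A)=f(A)$ for every $A\subseteq X$. Lemma~\ref{lem:closureintersection} is what guarantees that $S(f)$ is well behaved as a family of labels, but the displayed argument uses only the three closure axioms directly.

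The only genuine obstacle here is conceptual rather than technical: recognizing that intersecting the label sets $\Phi(y)$ over $y\in A$ produces precisely the collection of closed sets lying above $A$, after which the characterization $f(A)=\bigcap\{s\in S(f):A\subseteq s\}$ finishes the job. I do not expect any delicate case analysis beyond the empty-set bookkeeping noted above.
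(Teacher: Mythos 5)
Your proof is correct and follows essentially the same route as the paper's: part (1) defers to the computation in Remark~\ref{rem:classifierisclosureoperator}, and part (2) establishes $g(A)=\bigcap\{s\in S(f):A\subseteq s\}=f(A)$ using exactly the two ingredients the paper uses (that $f(A)$ is a closed set containing $A$, and that monotonicity plus idempotence force $f(A)\subseteq s$ for any closed $s\supseteq A$), merely packaged as ``$f(A)$ is the smallest closed superset of $A$'' rather than as two separate inclusions between $g$ and $f$. Your side remark on $f(\emptyset)=\emptyset$ under the convention $\bigcap_{y\in\emptyset}\Phi(y)=L$ flags a point the paper passes over silently, and your resolution for the canonical $\Phi$ of part (2) --- no data point carries the label $\emptyset\in S(f)$ --- is correct.
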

The representation in  the second part of the above proposition  is not unique. However, section \ref{sec:complexity} provide the minimum number of labels needed to represent a given classifier.

\subsection{Application 2: Decision theory}

Consider a decision-maker who is going to ultimately choose an element
$x\in X$, but who is first presented with a choice among possible \df{menus}
$A\subseteq X$. After selecting the menu $A$, the consumer chooses an
alternative from $A$.

We may think of a consumer, for example, who
first  decides between retail shops, or restaurants, and then chooses an item from the 
shop, or a meal on the menu. But the situation is quite general: Most problems in dynamic choice involve
first selecting an alternative that will determine what may be chosen
in the future: think, for example, of choosing how much to consume and
how much to save, which will determine what is affordable in the
future. Or making a career choice, which may determine the set of
geographical location one may ultimately be able to choose among to
live in. We capture the situation in the abstract as a problem of
choosing a set in $X$. In particular, we shall focus on agents who
have a {\em preference for flexibility.}

Consider a decision maker, say Alice, who chooses according to a preference relation
$\succeq$ on $X=\{x,y,z \}$. If Alice can perfectly predict her choice, she will
only care about the best option available in a menu. So if, for
example, $x\succ y\succ z$ are Alice's preferences, then she will be indifferent between any
menu that contains $x$, and between any menu that does not contain $x$ but that contains $y$. Alice's situation may be described through a
strategically rational operator, with \[
\{x,y,z \} = f_{\succeq}(\{x\})= f_{\succeq}(\{x,y\}) = f_{\succeq}(\{x,z\}) = f_{\succeq}(\{x,y,z\}).
\]

Bob, in contrast, may be unsure about his ultimate choice from a
menu. He might have the same preference as Alice, or he might actually have the preference $y\succ' x\succ' z$ that
ranks $y$ first. Observe that, as a consequence, Bob will strictly
prefer the menu $\{x,y,z \}$ to $\{x\}$, or to $\{x,z\}$, as the larger menu does
not close any doors for Bob. If, in the end, Bob's preference is $\succeq'$,
then 
$\{x,y,z \}$ lets him get his favorite choice. Indeed for Bob we need
an operator with \[
\{x,y,z \} = f(\{x,y,z\}) = f(\{x,y\}) \neq \{x,z \} = f(\{x\}) = f(\{x,z\}).
\]

Bob's preferences cannot be captured through a single strategically
rational operator. Actually for him  we have that $f=f_{\succeq}\cap f_{\succeq'}$,
and thus Bob's operator $f$ is derived from {\em two} strategically rational
operators.

Now we see, as a preview of our results, that Bob's behavior is \emph{more
complex} than Alice's because we need two basic strategically
rational operators to capture Bob's choice, while one suffices for
Alice. 

In all, Bob has a ``preference for flexibility,'' a model that  was first proposed by
\cite{kreps}.\footnote{Kreps' model motivated a sizable literature in
  decision theory; see, for example, \cite*{DLR}, \cite*{DLRS}, \cite{GP},
  \cite{nehring}, \cite{nehringspuppe1999}, \cite{chateauneufandjaffray}, and \cite*{chambers2020closure}} In Section~\ref{sec:applicationDT} below we
briefly recap Kreps' model and show what our results have to say about
the resulting representation.

\section{A general representation of closure operators}\label{sec:genrep}

A key motivation for the study in our paper is the next result, due to \cite*{chambers2020closure}, which shows that a closure operator may be found as the intersection of supporting half-spaces.\footnote{Note also that \cite{richterandrubinstein} contains a similar result for linear orders.}

\begin{theorem}\label{prop:representationbyweakorders}
A function $f:2^X\to 2^X$ is a closure operator iff there exist weak orders $\succsim_1,\ldots,\succsim_k$ on $X$, such that
\begin{equation}\label{eq:absstractduality}
    f(A)=\bigcap_{i\in \{1,\ldots,k\}} H(\succsim_i,h_A).
\end{equation}
\end{theorem}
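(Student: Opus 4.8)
The plan is to run both directions off a single observation about the operators $H(\succsim,h_\cdot)$. For a weak order $\succsim$ on $X$, list its indifference classes from bottom to top and set $D_j$ to be the union of the bottom $j$ classes, so that $\emptyset=D_0\subsetneq D_1\subsetneq\cdots\subsetneq D_m=X$; conversely, any chain in $2^X$ with least element $\emptyset$ and greatest element $X$ is of this form for a unique weak order. For nonempty $A$, the $\succsim$-maximal elements $h_A(\succsim)$ all lie in a single block $D_j\setminus D_{j-1}$ with $j=\min\{a:A\subseteq D_a\}$, and then $H(\succsim,h_A)=D_j$; together with the stipulated convention $H(\succsim,h_\emptyset)=\emptyset$ this says $H(\succsim,h_A)=\min\{D\in\{D_0,\dots,D_m\}:A\subseteq D\}$, the smallest ``down-set'' of $\succsim$ containing $A$. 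I would record this as a preliminary lemma, since it is the engine of everything that follows; in particular it makes transparent that $A\mapsto H(\succsim,h_A)$ is monotone and that its value lies inside any down-set of $\succsim$ above $A$.

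For the ``only if'' direction, let $f$ be a closure operator. By Lemma~\ref{lem:closureintersection}, $S(f)$ is a finite lattice with least element $\emptyset$ and greatest element $X$, and $f(A)=\bigcap\{C\in S(f):A\subseteq C\}$ is the least closed set containing $A$. Let $\mathcal{C}_1,\dots,\mathcal{C}_k$ enumerate the (finitely many) maximal chains of the poset $S(f)$; each is a chain in $2^X$ from $\emptyset$ to $X$, hence the down-set chain of a weak order $\succsim_i$, and $\bigcup_i\mathcal{C}_i=S(f)$ because in a finite poset every element lies on some maximal chain. Now fix $A$. By the preliminary lemma, $H(\succsim_i,h_A)$ is the least member of $\mathcal{C}_i\subseteq S(f)$ containing $A$, so it is a closed set containing $A$ and therefore $f(A)\subseteq H(\succsim_i,h_A)$; intersecting over $i$ gives $f(A)\subseteq\bigcap_i H(\succsim_i,h_A)$. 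For the reverse, $f(A)$ is itself closed, so $f(A)\in\mathcal{C}_j$ for some $j$, and since $A\subseteq f(A)\in\mathcal{C}_j$ the least element of $\mathcal{C}_j$ above $A$ is contained in $f(A)$, i.e.\ $H(\succsim_j,h_A)\subseteq f(A)$; hence $\bigcap_i H(\succsim_i,h_A)\subseteq f(A)$. This establishes \eqref{eq:absstractduality}.

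For the ``if'' direction, each $A\mapsto H(\succsim_i,h_A)$ is a strategically rational operator and hence a closure operator (either invoke the examples of Section~\ref{sec:defnclosureoperator}, or check extensivity, monotonicity and idempotence directly from the ``smallest down-set above $A$'' description). A finite intersection of closure operators is again a closure operator: extensivity and monotonicity are immediate, and idempotence holds because $\bigcap_i f_i\big(\bigcap_j f_j(A)\big)\subseteq\bigcap_i f_i\big(f_i(A)\big)=\bigcap_i f_i(A)$, using that each $f_i(A)\supseteq\bigcap_j f_j(A)$ together with monotonicity and idempotence of $f_i$. Hence $f=\bigcap_i H(\succsim_i,h_\cdot)$ is a closure operator, completing the proof. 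The one genuinely delicate point is the preliminary lemma — getting the dictionary between weak orders and $\emptyset$-to-$X$ chains exactly right, and verifying that $H(\succsim,h_A)$ is the \emph{least} down-set above $A$; once that is in place the representation is essentially the observation that a finite lattice is covered by its maximal chains, with $f(A)$ itself serving as the required witness closed set.
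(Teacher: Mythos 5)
Your proof is correct. One point of context: the paper does not actually supply its own proof of Theorem~\ref{prop:representationbyweakorders} --- it is imported from \cite{chambers2020closure} (with \cite{richterandrubinstein} covering the linear-order case) --- so the comparison is with that result and with the machinery the paper builds around it. Your route is the natural one and matches that machinery in spirit: your preliminary lemma (weak orders $\leftrightarrow$ chains from $\emptyset$ to $X$, with $H(\succsim,h_A)$ the least chain element containing $A$) is exactly the content of Proposition~\ref{lem:weakorderlattice} and the remark following it, your ``intersection of closure operators is a closure operator'' computation reproduces the first half of the paper's proof of Theorem~\ref{thm:comborepresentation}, and your verification that the chain operators generate $f$ is an instance of the two conditions of that theorem ($S(g_i)\subseteq S(f)$, plus separation, here witnessed by the chain through $f(A)$ itself). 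The genuine difference is in how you cover $S(f)$: you take \emph{all} maximal chains of the finite lattice $S(f)$, which makes existence immediate and needs nothing beyond the fact that every element of a finite poset lies on a maximal chain, whereas the paper's own treatment of this circle of ideas (the proof of Proposition~\ref{prop:complexityofoperators}) works with a Dilworth chain cover of $P(f)$ because it is after the \emph{minimal} number of weak orders. Your argument therefore buys simplicity at the cost of a typically wasteful $k$; the paper's buys the complexity bound. Two small points worth making explicit if you polish this: maximality of the chains is what guarantees they contain $\emptyset$ and $X$ (so that the dictionary with weak orders applies and the minimum over the chain exists), and $f(A)=\bigcap\{C\in S(f):A\subseteq C\}$ should be flagged as the standard consequence of Lemma~\ref{lem:closureintersection} (monotonicity plus idempotence) that your ``least closed set'' steps rely on.
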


Using the language introduced above, we may rephrase this result as follows:

\begin{corollary}\label{cor:stratrat}
  A function $f:2^X\to 2^X$ is a closure operator iff there exist
  strategically rational operators $g_{u_i}$, $1\leq i\leq k$, so that 
\begin{equation}\label{eq:corstratrat}
    f(A)=\bigcap_{i\in \{1,\ldots,k\}} g_{u_i}(A).
\end{equation}
\end{corollary}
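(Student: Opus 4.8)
The plan is to show that the strategically rational operators $g_u$ are literally the same objects as the supporting half-spaces $H(\succsim,h_{\cdot})$ --- a point the text has already hinted at --- and then read Corollary~\ref{cor:stratrat} directly off Theorem~\ref{prop:representationbyweakorders}. The one identity to establish is
\[
  g_u(A)=H(\succsim,h_A)\quad\text{for all }A\subseteq X
\]
whenever $u:X\to\Re$ represents the weak order $\succsim$, i.e.\ $u(x)\geq u(y)\iff x\succsim y$.

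To verify this, fix a nonempty $A$. Since $X$ is finite, $\max\{u(x'):x'\in A\}$ is attained, and its set of maximizers is exactly $h_A(\succsim)$, the set of $\succsim$-maximal elements of $A$. Pick any $m\in h_A(\succsim)$, so that $\max\{u(x'):x'\in A\}=u(m)$. Then $x\in g_u(A)$ iff $u(x)\leq u(m)$ iff $m\succsim x$ iff $x\in H(\succsim,h_A)$; the middle equivalence is the representation property of $u$, and the last uses that every element of $h_A(\succsim)$ is $\succsim$-indifferent to $m$, so the choice of $m$ is irrelevant. The case $A=\os$ is handled by the conventions $g_u(\os)=\os=H(\succsim,h_{\os})$.

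Next I would record that on a finite set weak orders and utility functions are interchangeable for this purpose: a weak order $\succsim$ is represented by the utility assigning to each element the integer rank of its indifference class, and conversely each $u:X\to\Re$ represents the weak order $x\succsim y\iff u(x)\geq u(y)$. Together with the identity above, this shows that the family $\{g_u:u:X\to\Re\}$ of strategically rational operators and the family $\{H(\succsim,h_{\cdot}):\succsim\in\R\}$ of supporting-half-space operators coincide as collections of maps $2^X\to 2^X$.

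The corollary then follows at once. If $f$ is a closure operator, Theorem~\ref{prop:representationbyweakorders} gives weak orders $\succsim_1,\dots,\succsim_k$ with $f(A)=\bigcap_{i} H(\succsim_i,h_A)$; choosing $u_i$ to represent $\succsim_i$ turns this into $f(A)=\bigcap_i g_{u_i}(A)$. Conversely, given $f(A)=\bigcap_i g_{u_i}(A)$, let $\succsim_i$ be the weak order induced by $u_i$; then $f(A)=\bigcap_i H(\succsim_i,h_A)$, and $f$ is a closure operator by the same theorem. There is no real obstacle: the only delicate points are the tie-breaking in the definition of $h_A$ (resolved by indifference within a maximal class) and the empty-set boundary case; finiteness of $X$ is what guarantees that the relevant maxima exist and that every weak order admits a utility representation.
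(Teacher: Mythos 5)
Your proof is correct and follows essentially the same route as the paper, which treats the corollary as an immediate restatement of Theorem~\ref{prop:representationbyweakorders} via the observation (made in Section~3.2) that strategically rational operators coincide with the supporting half-space operators $H(\succsim,h_{\cdot})$. Your explicit verification of $g_u(A)=H(\succsim,h_A)$, including the tie-breaking within $h_A(\succsim)$ and the empty-set convention, simply spells out the details the paper leaves implicit.
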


Our first result is an elaboration on Theorem~\ref{prop:representationbyweakorders}. It serves to introduce the notion that operators generate more complex operators, and how the resulting topologies are connected.

\begin{theorem}\label{thm:comborepresentation}
Let $g_1,\ldots,g_k$ be closure operators on $X$, then $f:2^X\to 2^X$ defined by \[    f(A)=\bigcap_{i\in \{1,\ldots,k\}} g_i(A) \] is a closure operator. We say that $f$ is \df{generated} from $g_1,\ldots,g_k$.

Moreover, if  $f, g_1,\ldots,g_k$ are closure operators on $X$ then $f$ is generated from $g_1,\ldots,g_k$ iff \begin{enumerate}
    \item $S(g_i)\subseteq S(f)$ for all $i\in \{1,\cdots,k\}$;\footnote{This statement is Theorem~5.1 in \cite{ward}.}
    \item and if $A\in S(f)$ and $x\notin A$, then there exists a closure operator $g_i\in \{g_1,\ldots,g_k\}$ such that $x\notin g_i(A)$.
\end{enumerate}
\end{theorem}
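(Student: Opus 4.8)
The first assertion — that an intersection $f(A) = \bigcap_i g_i(A)$ of closure operators is again a closure operator — is routine: extensivity follows because $A \subseteq g_i(A)$ for each $i$; monotonicity follows because each $g_i$ is monotone and intersection preserves inclusions; and idempotence follows from monotonicity plus the observation that $f(A) \subseteq g_i(A)$ implies $g_i(f(A)) \subseteq g_i(g_i(A)) = g_i(A)$, hence $f(f(A)) = \bigcap_i g_i(f(A)) \subseteq \bigcap_i g_i(A) = f(A)$, with the reverse inclusion coming from extensivity. I would dispatch this in one short paragraph.

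For the characterization, I would work with the topologies (sets of closed sets) and use Lemma~\ref{lem:closureintersection} throughout, since a closure operator is determined by its topology. The cleanest route is to first record the identity $S(f) = \{\, A \in 2^X : A = \bigcap_i g_i(A)\,\}$, and then to prove the key lemma that, for any family of closure operators, $f = \bigcap_i g_i$ holds if and only if $S(f) = \{\, \bigcap_{j} C_j : C_j \in S(g_j)\,\}$ — that is, the closed sets of $f$ are exactly the intersections of one closed set from each $S(g_i)$. Concretely: for $A \in S(f)$ we have $A = \bigcap_i g_i(A)$ and each $g_i(A) \in S(g_i)$; conversely any such intersection is a fixed point of $f$ because each $g_i$ fixes its own closed factor and is extensive on the others. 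Once this is in hand, condition~(1), $S(g_i) \subseteq S(f)$, is immediate in one direction (each closed set of $g_i$ is the intersection of itself with the $X$'s from the other factors) and is a necessary consequence of $f = \bigcap_i g_i$ in the other direction (if $C = g_i(C)$ then $f(C) \subseteq g_i(C) = C$).

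The substantive content is condition~(2), which is exactly the statement that the $g_i$ "separate points from closed sets" enough to recover $f$. For necessity: if $f = \bigcap_i g_i$, $A \in S(f)$, and $x \notin A$, then $x \notin \bigcap_i g_i(A)$, so $x \notin g_i(A)$ for some $i$ — done. For sufficiency, assume (1) and (2); I must show $f(A) = \bigcap_i g_i(A)$ for every $A$. The inclusion $f(A) \subseteq \bigcap_i g_i(A)$ follows from (1): since each $g_i(A) \supseteq A$ is in $S(g_i) \subseteq S(f)$, it is an $f$-closed set containing $A$, hence contains $f(A)$. For the reverse inclusion, suppose $x \notin f(A)$; since $f(A) \in S(f)$ and $x \notin f(A)$, condition~(2) gives some $g_i$ with $x \notin g_i(f(A))$, and by monotonicity $g_i(A) \subseteq g_i(f(A))$, so $x \notin g_i(A)$, whence $x \notin \bigcap_i g_i(A)$.

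I expect the only real obstacle to be bookkeeping: making sure condition~(2) is applied to the closed set $f(A)$ rather than to $A$ itself (applying it to $A$ would not suffice when $A$ is not $f$-closed), and correctly invoking monotonicity to transfer the separation from $f(A)$ down to $A$. Everything else is a direct unwinding of definitions together with Lemma~\ref{lem:closureintersection}; no clever construction is needed.
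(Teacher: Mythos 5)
Your proof is correct, and while the first part and the necessity of conditions (1) and (2) match the paper's argument, your sufficiency direction takes a genuinely different (and more careful) route. The paper argues via the chain $A\subseteq\bigcap_{j} g_j(A)\subseteq f(A)$, then applies $f$, idempotence, and the fact that $\bigcap_{j} g_j(A)\in S(f)$; crucially, it justifies the second inclusion in that chain by condition (1) alone, asserting $g_i(A)\subseteq f(A)$ term by term, and its sufficiency argument never invokes condition (2). That term-by-term inclusion does not follow from (1) and is false in general: in Example~\ref{ex:decompositiontochains} one has $g_3(\{a\})=\{a,b,c\}\not\subseteq f(\{a\})=\{a\}$ even though $g_1,g_2,g_3$ generate $f$; and condition (2) is indispensable for sufficiency (take $k=1$ with $g_1$ the trivial operator and $f$ any strictly finer closure operator, so (1) holds but $f\neq g_1$). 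You instead prove the two inclusions separately: condition (1) gives $f(A)\subseteq\bigcap_i g_i(A)$ because each $g_i(A)$ is an $f$-closed superset of $A$, and condition (2), applied to the closed set $f(A)$ and carried down to $A$ by monotonicity of $g_i$, gives $\bigcap_i g_i(A)\subseteq f(A)$ --- exactly the bookkeeping point you flag, and exactly where (2) must enter; your argument is in effect the repaired version of the paper's. Your auxiliary observation that, when $f=\bigcap_i g_i$, $S(f)$ consists of all intersections $\bigcap_j C_j$ with $C_j\in S(g_j)$ is also correct and, combined with the uniqueness statement in Lemma~\ref{lem:closureintersection}, could serve as an alternative backbone, though your final argument does not actually need it.
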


Theorem~\ref{thm:comborepresentation} warrants some remarks.
Assume that the classifier $f$ is generated by $g_1,\ldots,g_k$. By the first condition in the theorem, the topology $S(f)$ is finer than that of any constituent operator $S(g_i)$. 

The second condition is similar to the Separating Hyper Plane Theorem in convex geometry. It means that if $A$ is a closed set of $f$ and $x\notin A$, then there should be a separating classifier $g_i\in \{g_1,\ldots,g_k\}$ that detect that $x$ is not in the closure of $A$ with respect to $g_i$.

\section{Complexity of operators}\label{sec:complexity}

In the Machine Learning literature, a neural network is built from a
set of simple classifiers. Given a data set, and a set of labels, a
researcher can add many linear classifiers to build a large neural
network that can shatter data points to the correct classes. More
generally, a researcher can combine many different functions to form a
complex function with lots of parameters to shatter the set of data
points to the correct classes.

We proceed with an analogous motivation: in Theorem~\ref{thm:comborepresentation}, complex operators are built up from simpler ones. In the abstract, discrete, setting of our paper, there is no useful notion of norm or approximation, but the representation  in Theorem~\ref{thm:comborepresentation} serves as a starting point. When $f=\cap g_i$, we may think of $f$ as being more complex than any of its building blocks $g_i$, just as a neural network is more complex than its simpler constituents. In fact, using the theorem, we may associate complexity to the topology defined from each operator:

\begin{definition}
The operator $f$ is \df{more complex} than the operator $g$ if $S(g)\subseteq S(f)$.
  \end{definition}

So that a more complex operator induces a finer topology.  The ``more complex than'' relation is, we think natural, but it induces an incomplete partial order on operators, and will  render some pairs of operators incomparable. We consider two ways of completing the complexity binary relations.  

\begin{itemize}
\item The \df{minimum number of weak orders} (MNWO) of an operator $f$
  is the smallest number $n$ so there exists weak orders
  $\succeq_1,\ldots,\succeq_n$ with $f=\cap_{i=1}^n g_{\succeq_i}$.
\item The \df{minimum number of binary classifiers} (MNBC) of an
  operator $f$ is the smallest number $n$ so there exists subsets 
  $C_1,\ldots,C_n$ of $X$ with $f=\cap_{i=1}^n g_{C_i}$.
   \end{itemize}

The MNWO has a natural interpretation. We may think of strategically rational operators as simple because they do not exhibit a preference for flexibility. Or because they reflect a one-dimensional property (like size, color, being closer to the entrance of a supermarket). Their topologies are chains, which are naturally represented using $k-1$ labels when they are of length $k$. 

The MNBC may be though of as reflecting the length of the binary code needed to describe the topology of the closure operator. In fact, we can show that the minimum number of labels needed to describe a classifier, as a labeling correspondence, is exactly the same as minimum number of the binary classifier needed.

Let $P(f)$ by the elements of the topology $S(f)$ that are not the intersection of other closed sets in $S(f)$, and let $B(f)$ be the elements in $P(f)$ other than $\os$ and $X$.

\begin{proposition}\label{prop:complexityofoperators}  Let $f$ be a closure operator, then the MNWO of $f$ is equal to the  width of $P(f)$, and the MNBC is equal to the cardinality of $B(f)$.
\end{proposition}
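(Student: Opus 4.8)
## Proof Proposal

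The plan is to handle the MNWO and MNBC claims separately, since they involve different "simple" operators (strategically rational operators, with chain topologies, versus binary classifiers, with three-element topologies). In both cases the structure of the argument is the same: a lower bound showing that fewer simple operators cannot suffice, and an upper bound exhibiting an explicit representation achieving the stated count. The key observation throughout is a dual reformulation of the generation condition from Theorem~\ref{thm:comborepresentation}: if $f=\bigcap_i g_i$, then by condition (2), every closed set $A\in S(f)$ and every point $x\notin A$ must be "separated" by some $g_i$, meaning $x\notin g_i(A)$; equivalently $A$ must fail to be closed in… no, rather, the closed sets of $g_i$ that contain $A$ must collectively exclude $x$. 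The relevant consequence is that each meet-irreducible element of $S(f)$ — each element of $P(f)$ other than $X$ — must itself be closed in at least one $g_i$, because a meet-irreducible set $A$ with its unique covering element $A^+$ needs some $g_i$ separating $A$ from any $x\in A^+\setminus A$, and since $S(g_i)\subseteq S(f)$ and $S(g_i)$ is closed under intersection, the smallest $g_i$-closed set containing $A$ is an intersection of $f$-closed sets lying above $A$; meet-irreducibility forces it to equal $A$.

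For the MNWO claim, first I would show the lower bound: if $f=\bigcap_{i=1}^n g_{\succeq_i}$, then each $S(g_{\succeq_i})$ is a chain in $S(f)$ (the closed sets of a strategically rational operator are nested, being sublevel sets of $u$, hence totally ordered by inclusion). By the paragraph above, every element of $B(f)$ lies in some $S(g_{\succeq_i})$, so $B(f)$ — together with $\os$ and $X$, i.e.\ all of $P(f)$ — is covered by $n$ chains. By Dilworth's theorem (or just the pigeonhole/antichain argument), $n$ must be at least the width of $P(f)$. For the upper bound, I would take a minimum chain decomposition of $P(f)$ into $w=\mathrm{width}(P(f))$ chains, extend each chain to a maximal chain in $S(f)$, and define a weak order $\succeq_i$ whose sublevel sets are exactly that maximal chain (possible since any chain in $2^X$ containing $\os$ and $X$ arises as the family of sublevel sets of some function $X\to\Re$). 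Then I must verify $\bigcap_i g_{\succeq_i}=f$: the inclusion $\supseteq$ (really, that each $g_{\succeq_i}$ is more complex than… wait, less fine) follows from $S(g_{\succeq_i})\subseteq S(f)$ via Theorem~\ref{thm:comborepresentation} condition (1); and condition (2) holds because every meet-irreducible element of $S(f)$, in particular every element of $B(f)$, appears in one of the chains, and a general closed set is an intersection of meet-irreducibles, so separation of $A$ from $x\notin A$ reduces to separation at some meet-irreducible $A'\supseteq A$ with $x\notin A'$, which the chain containing $A'$ achieves.

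For the MNBC claim, the lower bound is similar but tighter: a binary classifier $g_C$ has topology $\{\os,C,X\}$, contributing only the single nontrivial closed set $C$. Since each element of $B(f)$ is meet-irreducible and hence (by the separation argument) must be closed in some $g_{C_i}$, and the only nontrivial closed set of $g_{C_i}$ is $C_i$, we need $B(f)\subseteq\{C_1,\ldots,C_n\}$, giving $n\geq |B(f)|$. For the upper bound, set $\{C_1,\ldots,C_m\}=B(f)$ with $m=|B(f)|$ and check $\bigcap_{i=1}^m g_{C_i}=f$: condition (1) holds since each $\{\os,C_i,X\}\subseteq S(f)$; condition (2) holds since for $A\in S(f)$, $A=\bigcap\{B\in B(f): B\supseteq A\}$ (every closed set is the meet of the meet-irreducibles above it), so if $x\notin A$ there is some $C_i=B\in B(f)$ with $A\subseteq C_i$ and $x\notin C_i$, whence $x\notin g_{C_i}(A)=C_i$. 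Finally I would note, as promised in the text, that $m=|B(f)|$ also equals the minimum number of labels in a labeling-correspondence representation, since a label's "extension" (the set of data points carrying it) is closed, only meet-irreducible extensions are needed, and $X$ and $\os$ need no label.

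The main obstacle I anticipate is the careful verification of the separation condition (2) in both upper-bound constructions, and dually its use in the lower bounds — specifically the claim that it suffices to separate points from meet-irreducible closed sets, and the claim that a simple operator $g_i$ forces its smallest closed set above $A$ to be a meet (within $S(f)$) of closed sets, so that meet-irreducibility of $A$ pins this down to $A$ itself. Getting the direction of "finer/coarser" right, and handling the degenerate roles of $\os$ and $X$ (which are automatically closed in every operator and so never need a dedicated weak order or binary classifier), is where the bookkeeping is delicate. The combinatorial core — covering $P(f)$ by a minimum number of chains — is exactly Dilworth's theorem and needs no new idea.
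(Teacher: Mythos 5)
Your proposal is correct and takes essentially the same route as the paper: the lower bounds come from the observation that every element of $P(f)$ must lie in the topology of some constituent operator (a chain for weak orders, a single nontrivial set for binary classifiers), and the upper bounds come from a Dilworth chain decomposition of $P(f)$, respectively from using one binary classifier per element of $B(f)$, verified through the two conditions of Theorem~\ref{thm:comborepresentation}. Your write-up actually supplies details the paper's terse argument leaves implicit (why meet-irreducibles must be closed in a constituent, and why every closed set is an intersection of the elements of $P(f)$ above it), but the underlying argument is the same.
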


\section{Application 1: complexity of classifiers}\label{sec:applicationclassifiers} 

In the application to classifiers, we may think of binary classifers $g_C$ as capturing a single property that datapoints may or may not posses, which makes them good candidates for the simplest possible classifers available, and thus suggest MNBC as a natural measure of complexity for a classifier.

In constrast, the strategically-rational operators $f_{\succeq}$ are
analogous to the linear classifers in Machine Learning.  Indeed,
$f_{\succeq}(A)=H(\succsim,h_A)$, a ``discrete halfspace,'' and the
topology associated to $f$ is
$S(f_\succsim)=\{H(A,h_A)|\ A\subseteq X\}$. Observe that 
$S(f_\succsim)$ is a single chain with respect to set inclusion (in
other words, the lattice associated with $f_\succsim$ is a total
order). Interestingly, the reverse is also correct. If a lattice associated with a classifier $f$ is a single chain (total order), then it is generated by a single unique weak order. 

\begin{proposition}\label{lem:weakorderlattice}
The lattice associated with a closure operator is a single chain if and only if a single weak order generates it. Moreover, the weak order generating the lattice is unique.
\end{proposition}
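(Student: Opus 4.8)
The plan is to prove both directions and then uniqueness. For the ``if'' direction, suppose $f = f_{\succeq}$ for a single weak order $\succeq$. Then $S(f) = \{H(\succeq, h_A) : A \subseteq X\}$, and I would show this family is totally ordered by inclusion. The key observation is that $H(\succeq, h_A)$ depends on $A$ only through the maximal $\succeq$-rank attained on $A$: if $m(A)$ denotes the top indifference class of $\succeq$ that meets $A$, then $H(\succeq, h_A) = \{x : m(A) \succeq x\}$, which is a ``down-set'' in the quotient chain of $\succeq$. Since the indifference classes of $\succeq$ are linearly ordered, these down-sets are nested, so $S(f)$ is a chain. (I should also check $\emptyset$ and $X$ are in there, which they are by the conventions in the excerpt.)

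For the ``only if'' direction, suppose the lattice $S(f)$ is a single chain $\emptyset = A_0 \subsetneq A_1 \subsetneq \cdots \subsetneq A_m = X$. I want to produce a weak order $\succeq$ with $f = f_{\succeq}$. The natural candidate: stratify $X$ by the ``level at which a point first appears,'' i.e.\ put $x$ in class $i$ if $x \in A_i \setminus A_{i-1}$, and declare higher index $\succ$ lower index (points in the same $A_i \setminus A_{i-1}$ being indifferent). This is a well-defined weak order since the $A_i$ partition $X$ into the successive differences. Then I must verify $f = f_{\succeq}$. By Lemma~\ref{lem:closureintersection} it suffices to check the two operators have the same closed sets, i.e.\ $S(f_{\succeq}) = \{A_0, \ldots, A_m\}$: each $A_i$ is closed under $f_{\succeq}$ because it is exactly the down-set $\{x : \text{level}(x) \le i\}$, and conversely every closed set of $f_{\succeq}$ has this form by the first part of the argument. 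Alternatively, and perhaps more cleanly, I can invoke Theorem~\ref{thm:comborepresentation} and Proposition~\ref{prop:complexityofoperators}: if $S(f)$ is a chain then $P(f) = S(f)$ is a chain, so its width is $1$, hence the MNWO is $1$, which says precisely that a single weak order generates $f$.

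For uniqueness, suppose $\succeq$ and $\succeq'$ both generate $f$. Then $S(f_{\succeq}) = S(f_{\succeq'})$ as families of sets, namely the chain $A_0 \subsetneq \cdots \subsetneq A_m$. From the first part, the closed sets of $f_{\succeq}$ recover the indifference classes of $\succeq$ as the successive differences $A_i \setminus A_{i-1}$, together with their linear order (class $i$ above class $i-1$). The same is true for $\succeq'$. Hence $\succeq$ and $\succeq'$ have identical indifference classes, identically ordered, so $\succeq = \succeq'$. I expect the main obstacle to be the bookkeeping in the ``only if'' direction — verifying that the stratification-induced weak order has exactly the prescribed chain as its family of closed sets, with no spurious closed sets appearing — and I would lean on the characterization $f_{\succeq}(A) = \{x : \max_{\succeq} A \succeq x\}$ to keep this tight, or else route around it entirely via Proposition~\ref{prop:complexityofoperators}.
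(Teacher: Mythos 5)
Your main argument is correct and essentially the paper's own proof: the weak order you build from the strata $A_i\setminus A_{i-1}$ is exactly the order the paper defines via singleton closures ($a\succsim b$ iff $f(a)\supseteq f(b)$), and verifying $S(f_\succeq)=S(f)$ and invoking Lemma~\ref{lem:closureintersection} is just a mild variant of the paper's direct check that $f(A)=H(\succsim,h_A)$, with your explicit uniqueness step a welcome addition the paper leaves implicit. One caution: the proposed shortcut via Proposition~\ref{prop:complexityofoperators} should be dropped, since the proof of that proposition (Dilworth decomposition into chains, each generated by a weak order) itself relies on the chain--weak-order correspondence being established here, so routing through it would be circular; your direct stratification argument needs no such appeal.
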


More generally, we say that $f$ is a \df{simple classifier with length $k$}, whenever the number of closed sets (classes) generated by $f$, $|S(f)|$, is $k+1$. Notice that $S(f)$ always contains $\emptyset$ and $X$.  Hence, a simple classifier with length $k$ has $k$ nonempty different classes. 

Consider the classifier associated with Example~\ref{ex:classifierexample}. There are eight classes other than $\emptyset$. Let $L=\{\text{Class1},\ldots,\text{Class8}\}$. Using the result of Proposition~\ref{prop:labelingclassifier}, one choice of labeling correspondence is
\begin{align*}
    \Phi(a)&=\{\text{Class1}, \text{Class5}, \text{Class6},  \text{Class7}, \text{Class8}\} \\
    \Phi(b)&=\{\text{Class2}, \text{Class5}, \text{Class6}, \text{Class8}\} \\
    \Phi(c)&= \{\text{Class3}, \text{Class7}, \text{Class8}\} \\
    \Phi(d)&=\{\text{Class4}, \text{Class6}, \text{Class8}\}.
\end{align*}
The above labeling correspondence is different that the original one in the example. However, they both have the same classifier with the same set of classes.

\begin{remark}\label{rem:weakorderclassifierlenghtk}
Using the proof of Lemma~\ref{lem:weakorderlattice}, any simple classifier with length $k$ is associated with a unique weak order with $k$ different indifference classes. Let $\{A_1,\ldots, A_k\}$ be a partition of $X$ into the indifference classes of a weak order $\succsim$ with $x\succ y$ whenever $y\in A_i, x\in A_j$ such that $i<j$. Then, $\{A_1,(A_1\cup A_2),\ldots,(A_1\cup A_2\cup\ldots\cup A_k)\}$ is the set of closed sets of a simple classifier with length $k$. The reverse can be done in the same way. 
\end{remark}

\begin{remark}
Consider a binary classifier $f_C:2^X\to2^X$. Assume that the
corresponding topology is $S(f)=\{\emptyset,C,X\}$. The
corresponding weak order $\succsim_f$ is $x\succsim y $ if and only if $x\in
A,y\in X\setminus A$. From the perspective of classification, there is
no difference between $f$ and another binary classifier $g$ with
$S(g)=\{\emptyset,X\setminus C,X\}$. The weak order associated with
$g$ is the reverse order of  $\succsim_f$. The classifier derived from
the label ``horse'' achieves the same purposes as one derived from ``not
a horse.''
\end{remark}

\begin{example}\label{ex:classifierintobinary}
By Theorem~\ref{prop:representationbyweakorders}, any given classifier $f$ may be generated by weak orders
$\succsim_1,\ldots,\succsim_k$. Now we use use
Theorem~\ref{thm:comborepresentation} to define binary classifiers $\{g_1,\dots,g_k\}$ that generate $f$. 
One simple way is to notice that for any binary classifier $g_i$,
$S(g_i)$ should be a subset of $S(f)$. Moreover, for any class
$\emptyset\neq A\in S(f)$ and $x\notin A$, there should be one of
$g_i$ to separate $x$ and $A$. Hence, if we consider all binary
classifiers $g_C$ for every closed set $C\in S(f)$, both requirements
of Theorem~\ref{thm:comborepresentation} will be satisfied.
\end{example}

In Example~\ref{ex:classifierintobinary}, the representation uses
$|S(f)|-1$ binary classifiers and is clearly not minimal. We shall see
how to obtain a minimal representation using binary classifiers, but
first we consider strategically rational classifers of different
length. One way to proceed (which we shall see is not optimal) is to
decompose $S(f)$ into chains:

\begin{example}\label{ex:decompositiontochains}
\sloppy Let $X=\{a,b,c\}$ be a set of data points and $f:2^X\to2^X$ be a closure operator with the set of closed sets $S(f)=\{\emptyset,\{a\},\{b\},\{c\},\{a,b\},\{a,b,c\}\}$. We consider the following three chains:
\begin{align*}
    S(g_1)&=\{\emptyset,\{a\},\{a,b\},\{a,b,c\}\}\\ S(g_2)&=\{\emptyset,\{b\},\{a,b\},\{a,b,c\}\}\\ S(g_3)&=\{\emptyset,\{c\},\{a,b,c\}\}
\end{align*}
Notice that, since both conditions of Theorem~\ref{thm:comborepresentation} are satisfied, then $\{g_1,g_2,g_3\}$ generates $f$. Figure~\ref{fig:exampleclosuretochain} illustrates the decomposition.

\begin{figure}
\centering

\begin{tikzpicture}[scale=0.1]
\matrix (A)[matrix of math nodes, column sep=0.6cm, row sep=0.5cm]{
&\boxed{\{a,b,c\}}\\
&\boxed{\{a,b\}}\\
\boxed{\{a\}}& \boxed{\{b\}} &\boxed{\{c\}}\\
& \boxed{\emptyset}\\};

\foreach \i/\j in {1-2/2-2, 1-2/3-3, 2-2/3-1, 2-2/3-2, 3-1/4-2, 3-2/4-2, 3-3/4-2}
    \draw[<-] (A-\i) -- (A-\j);

\end{tikzpicture}
\qquad
\begin{tikzpicture}[scale=.2]
\matrix (A)[matrix of math nodes, column sep=0.6cm, row sep=0.5cm]{
&\boxed{\{a,b,c\}}&\boxed{\{a,b,c\}}&\boxed{\{a,b,c\}}\\
&\boxed{\{a,b\}}&\boxed{\{a,b\}}\\
\boxed{\{a\}}& &\boxed{\{b\}}& &\boxed{\{c\}}\\
& \boxed{\emptyset}& \boxed{\emptyset}& \boxed{\emptyset}\\};

\foreach \i/\j in {1-2/2-2,1-3/2-3,1-4/3-5,2-2/3-1,3-1/4-2,3-3/4-3,2-3/3-3,3-5/4-4}
    \draw[<-] (A-\i) -- (A-\j);

\end{tikzpicture}

\caption{The lattice associated with the closure operator $f$ is in the top. The associated decomposition into $g_1,g_2,g_3$ is in the bottom.}
\label{fig:exampleclosuretochain}

\end{figure}
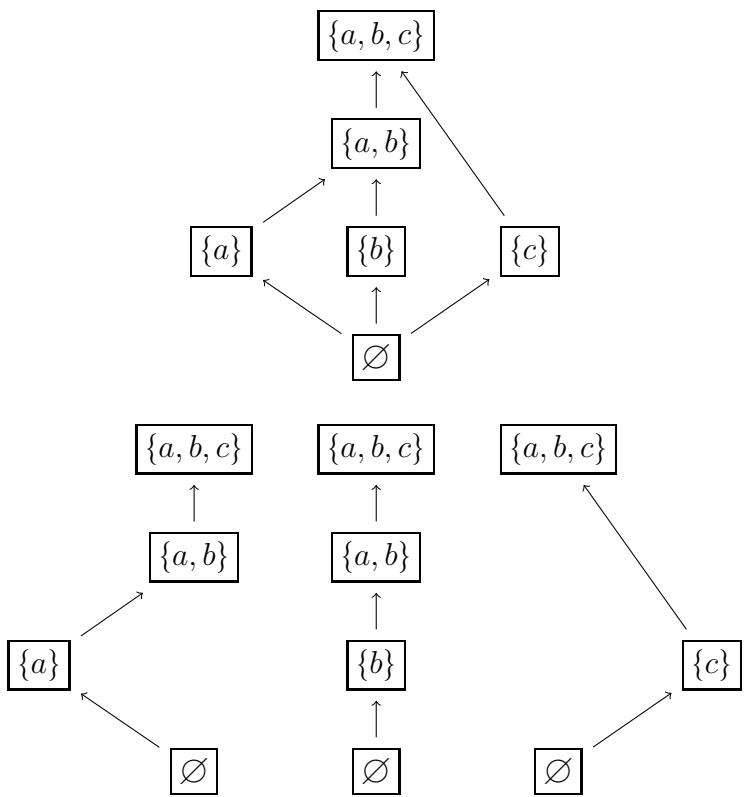
\end{example}

We take up a discussion of minimal representations in Section~\ref{sec:discussion}

\section{Application 2: Choice over menus}\label{sec:applicationDT}

Our second application is to the choice of menus, $A\subseteq
X$. In decision theory, the choice over menus are thought of as capturing the general problem of dynamic choice in an abstract setting. An agent is making choices today that may constrain her future choices. To this end, suppose that $\kpref$, a preference relation over $2^X$, captures choices made
over menus. The indifference relation derived from $\kpref$ is denoted by $\ksim$, so that $A\ksim B$ when $A\kpref B$ and $B\kpref A$. 

\cite{kreps} introduced the idea that an agent may have a preference for flexibility due to uncertainty about her ultimate future choices. He proposed some simple axioms on $\ksim$, and proved that they give rise to a particular kind of utility representation; a representation that reflects the agents' uncertainty over a state space that guides her future choices. 

The literature on menu choice in economics is substantial, but for our purposes we want to highlight the work of \cite{DLR} because their theory of choice hinges crucially on adopting a particular closure operator. In a setting of choice over lotteries, they use the convex hull operator (which is a closure operator in the Euclidean setting of their paper). Kreps' result also uses a closure operator in constructing his state space, but his axioms do not explicitly reference the closure operator.  

Following \cite{kreps}, we entertain the following axioms on $\kpref$:

\begin{enumerate}
    \item Desire for flexibility: $B\subseteq A$ implies $A\kpref B$, 
    \item Ordinal submodularity: $A\ksim A\cup B$ implies that for all $C$, $A\cup C\ksim A\cup B\cup C$.
\end{enumerate}

There are two possible approaches. The first is to define, as in
\cite{kreps}, the function  $f:2^X\to 2^X$ from preference $\kpref$ by
\begin{equation}\label{eq:krepsclosureeq}
f(A)=\bigcup_{B\in 2^X,\ A\ksim A\cup B} B .
\end{equation}

The second approach is proposed by  \cite{DLR}, using a convex-hull operator over subsets (menus) of a Euclidean space. The key property in this approach is that a closure operator is given so that $A\ksim f(A)$ for all menus $A$. 
\begin{definition}
 $\kpref$ \df{respects} $f$ if $A\ksim f(A)$ for every menu $A\in 2^X$.
\end{definition}

If we follow the first approach then  it is easy to show (see \cite{kreps}) that, under the two axioms,
\begin{enumerate}
    \item $f$ is a closure operator,
    \item $\kpref$ respects $f$,
    \item $A\ksim A\cup B$ if and only if $f(B)\subseteq f(A)$,
    \item $f(B)\subset f(A)$, then $A\succ' B$.
\end{enumerate}

In consequence, we obtain the following version of Kreps' result:

\begin{theorem}\label{prop:krepsdecomposition1} Let $\kpref$ be
  a preference relation over $2^X$ that satisfies desire for flexibility
  and ordinal submodularity, and let $f$ be defined using
  Equation~\eqref{eq:krepsclosureeq}. Then there is a function
  $U:X\times S(f) \to\mathbb{R}$, and a strictly increasing function
  $u:\mathbb{R}^S\to \mathbb{R}$ such that  \begin{equation}\label{eq:krepsfirst}
    u([\underset{a\in A}{max}\ U(a,s)]_{s\in S})
\end{equation} represents $\kpref$. The minimum number of states
  (cardinality of $S(f)$) needed for the representation is precisely
  the MNWO of the associated operator $f$, which is $|P(f)|$.
\end{theorem}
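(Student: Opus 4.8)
The plan is to prove the existence of the representation and the tightness of the state count as two separate tasks, relying throughout on the four properties of $f$ recorded just before the theorem ($f$ is a closure operator, $\kpref$ respects $f$, $A\ksim A\cup B$ iff $f(B)\subseteq f(A)$, and $f(B)\subsetneq f(A)$ implies $A\succ' B$). I would first isolate two consequences. Setting $B=\{x\}$ in the third property yields the identity $x\in f(A)\iff A\cup\{x\}\ksim A$, which will link the $\max$-representation to intersections of strategically rational operators. And since $f(A)=f(B)$ gives $A\ksim f(A)=f(B)\ksim B$, the relation $\kpref$ descends to a well-defined weak order on $S(f)$; on closed sets $C\subsetneq D$ forces $D\succ' C$, while inclusion-incomparable closed sets may be ranked either way, and it is precisely this extra information --- not visible in $f$ --- that the monotone aggregator $u$ must encode.

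\textbf{Existence.} I would take $S=S(f)$ and put $U(a,s)=\one_{X\setminus s}(a)$, so that (reading the empty menu as having value below everything) $\max_{a\in A}U(a,s)=1$ exactly when $A\not\subseteq s$. Write $\pi(A)=[\max_{a\in A}U(a,s)]_{s\in S(f)}\in\mathbb R^{S(f)}$. Since $A\subseteq s\iff f(A)\subseteq s$ for closed $s$, the vector $\pi(A)$ depends only on $f(A)$, records exactly which closed sets contain $f(A)$, and --- every closed set being the intersection of the closed sets above it --- the induced map $S(f)\to\mathbb R^{S(f)}$ is injective, with $\pi(A)\ge\pi(B)$ coordinatewise iff $f(B)\subseteq f(A)$. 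Consequently $\kpref$ pushes forward to a weak order on the finite set $\pi(2^X)$ that is compatible with the coordinatewise order in the strong sense that $\pi(A)\ge\pi(B)$ with $\pi(A)\neq\pi(B)$ implies $A\succ' B$. Adapting the construction of \cite{DLR}, any such weak order on finitely many points extends to a strictly increasing $u:\mathbb R^{S(f)}\to\mathbb R$ representing it, and then $u(\pi(A))$ is the representation~\eqref{eq:krepsfirst}.

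\textbf{Tightness.} Given any triple $(S,U,u)$ with $u$ strictly increasing for which $u([\max_{a\in A}U(a,s)]_{s\in S})$ represents $\kpref$, I would set $x\succeq_s y\iff U(x,s)\ge U(y,s)$, with strategically rational operator $f_{\succeq_s}(A)=\{x:U(x,s)\le\max_{a\in A}U(a,s)\}$, and argue that $f=\bigcap_{s\in S}f_{\succeq_s}$: indeed $x\in\bigcap_{s}f_{\succeq_s}(A)$ iff the profile of $A\cup\{x\}$ equals that of $A$, iff $A\cup\{x\}\ksim A$ (one direction because $u$ is a function; the converse because the profile of $A\cup\{x\}$ always dominates that of $A$, so strict monotonicity of $u$ would otherwise force $A\cup\{x\}\succ' A$), iff $x\in f(A)$ by the identity above. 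Hence $|S|$ is at least the MNWO of $f$. For the reverse, I would rerun the existence construction from a minimal decomposition $f=\bigcap_{i=1}^{n}f_{\succeq_i}$ with $n$ the MNWO of $f$, taking $U(a,i)$ to be the rank of $a$ in $\succeq_i$; since $f_{\succeq_i}(A)=f_{\succeq_i}(f(A))$, the profile is again a function of $f(A)$ that determines it, the coordinatewise order on profiles again matches inclusion of $f$-values, and the same extension step produces a strictly increasing $u:\mathbb R^{n}\to\mathbb R$. Thus the minimum number of states equals the MNWO of $f$, and Proposition~\ref{prop:complexityofoperators} supplies its combinatorial value.

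\textbf{Main obstacle.} The difficulty is not either inequality alone but the compatibility forced by the clause ``$u$ strictly increasing.'' One must choose $U$ so that the coordinatewise order on the profile vectors coincides exactly with inclusion of the associated $f$-values: this is what makes a strictly increasing aggregator simultaneously attainable (in the existence half) and unavoidable (in the tightness half, where strict monotonicity is precisely what turns ``strictly larger menu profile'' into ``strictly preferred menu''). The one genuinely analytic ingredient --- that a weak order on a finite subset of $\mathbb R^n$ respecting coordinatewise dominance can be realized by a globally strictly increasing utility --- is standard in the Kreps--DLR tradition but must be imported carefully; and one has to remember that $\kpref$ carries strictly more information than $f$, which is exactly the slack that the freedom in choosing the monotone $u$ absorbs.
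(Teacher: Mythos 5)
Your argument is correct and is essentially the proof the paper intends: the paper gives no standalone proof of this theorem, deferring the representation itself to Kreps' construction (your existence step, including the standard strictly-increasing extension lemma for a dominance-respecting weak order on a finite set of profile vectors, which you rightly flag as imported) and the state count to the identification of each state $s$ with the weak order induced by $U(\cdot,s)$, so that $f=\bigcap_s f_{\succeq_s}$, combined with Proposition~\ref{prop:complexityofoperators} (your tightness step). One small remark: your conclusion identifies the minimum number of states with the MNWO, i.e.\ the width of $P(f)$ as given by Proposition~\ref{prop:complexityofoperators}, whereas the theorem's final clause writes $|P(f)|$ --- that discrepancy is in the paper's wording, not in your proof.
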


If we instead adopt a given closure operator $f$, we obtain an analogue to the result in \cite{DLR}:

\begin{theorem}\label{prop:additivegeneralrespect} Suppose given a
  closure operator $f$, and a preference $\kpref$ over $2^X$ that
  respects $f$. Then there exist a state space $S$, where $S=S^+\cup S^-$ with $S^+\cap S^-=\emptyset$ and has cardinality at most $2(|S(f)|-1)$, and a state-dependent utility $U:X\times S\to\mathbb{R}$, such that:
\begin{equation}\label{eq:additivelastresult}
U(A)=\sum_{s\in S^+}\underset{a\in A}{max}\, U(a,s)-\sum_{s\in S^-}\underset{a\in A}{max}\, U(a,s)
\end{equation}
represents $\kpref$.
\end{theorem}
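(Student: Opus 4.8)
The plan is to reduce everything to the finite lattice $S(f)$ and then express an arbitrary ordinal utility on that lattice as a signed combination of ``$\max$-of-indicator'' features, each feature being exactly what one positive or negative state contributes to the additive form. First I would note that, since $\kpref$ is a weak order respecting $f$, transitivity gives $A\kpref B\iff f(A)\kpref f(B)$; hence $\kpref$ is determined by its restriction to $S(f)$, which is a weak order on a finite set and so has an ordinal utility $V:S(f)\to\mathbb{R}$. Normalize it so that $V(\emptyset)=0$. Any map $U:2^X\to\mathbb{R}$ of the form $U(A)=V(f(A))$ then represents $\kpref$, so it suffices to exhibit one such $U$ in the additive form of \eqref{eq:additivelastresult}.

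For each closed set $C\in S(f)$ define the feature $\phi_C(A)=\max_{a\in A}\one_{X\setminus C}(a)$, with the convention $\max_{a\in\emptyset}(\cdot)=0$. Because $C$ is closed, $A\subseteq C\iff f(A)\subseteq C$, so $\phi_C(A)=\one[f(A)\not\subseteq C]$ depends only on $f(A)$; note also $\phi_X\equiv 0$. The crux is to show that $(\phi_C)_{C\in S(f)}$, read as functions on $S(f)$, span the $(|S(f)|-1)$-dimensional space of real functions on $S(f)$ that vanish at $\emptyset$. For $B\in S(f)$ one has $\sum_{C\in S(f)}\lambda_C\,\phi_C(B)=\big(\sum_{C}\lambda_C\big)-\sum_{C\supseteq B}\lambda_C$, and the linear map $\lambda\mapsto\big(B\mapsto\sum_{C\supseteq B,\,C\in S(f)}\lambda_C\big)$ is invertible, being triangular with respect to reverse inclusion (equivalently, M\"obius inversion on the poset $S(f)$). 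Solving the system $\sum_{C\supseteq B}\lambda_C=V(X)-V(B)$ over $B\in S(f)$ yields coefficients $(\lambda_C)_{C\in S(f)}$ with $\lambda_X=0$ and $\sum_{C\in S(f)}\lambda_C\,\phi_C(A)=V(f(A))$ for every $A\in 2^X$.

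Finally I would split the coefficients by sign. Put $S^{+}=\{C\in S(f):\lambda_C>0\}$ and $S^{-}=\{C\in S(f):\lambda_C<0\}$; these are disjoint and contained in $S(f)\setminus\{X\}$, so $|S^{+}|+|S^{-}|\le |S(f)|-1\le 2(|S(f)|-1)$. Set $U(x,s_C)=|\lambda_C|\,\one_{X\setminus C}(x)$, so that $\max_{a\in A}U(a,s_C)=|\lambda_C|\,\phi_C(A)$. Then
\[
\sum_{s\in S^{+}}\max_{a\in A}U(a,s)-\sum_{s\in S^{-}}\max_{a\in A}U(a,s)=\sum_{C\in S(f)}\lambda_C\,\phi_C(A)=V(f(A)),
\]
which represents $\kpref$, and the $\emptyset$-convention makes this evaluate to $V(\emptyset)=0$ at $A=\emptyset$. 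This is the discrete analogue of the Dekel--Lipman--Rustichini additive representation, with $f$ in the role played there by the convex-hull operator.

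The single step I expect to require real care is the spanning claim --- seeing that $\max$ of $\{0,1\}$-valued state utilities produces precisely the indicators $\one[f(A)\not\subseteq C]$, and that these form a basis for the functions vanishing at $\emptyset$ (via invertibility of the zeta/M\"obius transform of $S(f)$). Everything else --- the factorization of $\kpref$ through $S(f)$, the sign bookkeeping, and the treatment of $\emptyset$ --- is routine. (The argument in fact delivers the sharper bound $|S(f)|-1$ on the total number of states.)
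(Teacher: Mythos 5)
Your proposal is correct, and at its core it runs on the same engine as the paper's proof: both invert the cumulative (zeta/M\"obius) system on the finite lattice $S(f)$ to obtain signed weights indexed by closed sets, and both use the observation that for closed $C$ one has $A\subseteq C$ iff $f(A)\subseteq C$, so that each weight can be realized as a $\max_{a\in A}$ of a state-dependent utility. Where you genuinely depart from the paper is in how the states are packaged. The paper writes $U(A)=\sum_{A\subseteq B,\,B\in S(f)}h(B)$, splits $h=h^{+}-h^{-}$, and for each nonempty closed set $B$ builds \emph{two} states with nonpositive utilities equal to $-h^{\pm}(B)$ on $B$ and $0$ off $B$ (so the max detects $A\subseteq B$), which is what produces the stated bound $2(|S(f)|-1)$. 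You instead work with the complement indicators $\phi_C(A)=\one[A\not\subseteq C]$, solve the triangular system for coefficients $\lambda_C$ (with $\lambda_X=0$ automatic), and realize each nonzero $\lambda_C$ by a \emph{single} state with nonnegative utility $|\lambda_C|\one_{X\setminus C}$, assigned to $S^{+}$ or $S^{-}$ according to the sign of $\lambda_C$. This buys a sharper bound of $|S(f)|-1$ total states (consistent with the paper's own remark that its construction is not minimal), and your explicit normalization $V(\emptyset)=0$ together with the convention $\max_{a\in\emptyset}(\cdot)=0$ handles the empty menu more carefully than the paper does. The one step you flag as delicate—the invertibility of $\lambda\mapsto\bigl(B\mapsto\sum_{C\supseteq B,\,C\in S(f)}\lambda_C\bigr)$—is indeed fine: it is the zeta transform of the finite poset $S(f)$, unitriangular in any linear extension, which is exactly the M\"obius-inversion fact the paper invokes.
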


Theorem~\ref{prop:additivegeneralrespect} presumes a closure operator
that respects the preference $\kpref$, but is otherwise
arbitrary. \cite{DLR}, working in a space of lotteries, impose that a
preference respects the convex-hull operator, mapping each set in a
Euclidean space into its convex hull (this is indeed a closure operator in
the Euclidean case). Our theorem shows that the convexity properties of lotteries are not needed. Once we adopt the framework in our paper, the same ideas may be extended to the purely finite and discrete setting of our paper.

Now, any preference relation is respected by the identity operator,
which gives rise to the following consequence of our result:

\begin{corollary}
For every preference ordering $\kpref$ over the set of menus, there exists a representation as in Equation~\ref{eq:additivelastresult} with at most $2\times(2^{|X|}-1)$ states. 
\end{corollary}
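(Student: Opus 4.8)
The plan is to derive the corollary directly from Theorem~\ref{prop:additivegeneralrespect} by choosing the closure operator in the most trivial possible way. Since every preference $\kpref$ over menus is respected by \emph{some} closure operator, all we need is to exhibit one such operator whose topology $S(f)$ is as large as possible, and the identity operator $I$ is the obvious candidate: $I(A)=A$ for all $A$, so $A\ksim I(A)$ holds vacuously for any $\kpref$ whatsoever. Thus $\kpref$ respects $I$.

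The key computation is then simply to count $|S(I)|$. Because $I$ fixes every subset of $X$, its topology is the entire power set: $S(I)=2^X$, which has cardinality $2^{|X|}$. Plugging $f=I$ into Theorem~\ref{prop:additivegeneralrespect}, we obtain a representation of the form in Equation~\eqref{eq:additivelastresult} with a state space $S=S^+\cup S^-$ of cardinality at most $2(|S(I)|-1)=2(2^{|X|}-1)$. This is exactly the bound claimed in the corollary, so the proof is essentially a one-line instantiation once the two facts ($\kpref$ respects $I$, and $|S(I)|=2^{|X|}$) are recorded.

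I do not anticipate a genuine obstacle here: the only thing to be careful about is confirming that the identity map really is a closure operator (extensivity, idempotence, monotonicity all hold trivially since $I(A)=A$ — this is even listed as the first example of a closure operator in Section~\ref{sec:defnclosureoperator}) and that Theorem~\ref{prop:additivegeneralrespect} applies with no hidden hypotheses beyond ``$f$ is a closure operator and $\kpref$ respects $f$,'' which it does. So the write-up is: observe $I$ is a closure operator, observe every $\kpref$ respects it, observe $S(I)=2^X$, and apply the theorem. If one wanted to be slightly more informative, one could note that no smaller state space is guaranteed in general by this route precisely because a preference with no indifferences among distinct menus forces $f=I$ under the Kreps construction, but this refinement is not needed for the corollary as stated.

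\begin{proof}
Recall from Section~\ref{sec:defnclosureoperator} that the identity operator $I:2^X\to 2^X$, defined by $I(A)=A$ for every $A\in 2^X$, is a closure operator, and that its topology is $S(I)=I(2^X)=2^X$, so $|S(I)|=2^{|X|}$. Moreover, for any preference relation $\kpref$ over $2^X$ we have $A\ksim I(A)$ for every menu $A$, since $I(A)=A$; hence $\kpref$ respects $I$. Applying Theorem~\ref{prop:additivegeneralrespect} with $f=I$ yields a state space $S=S^+\cup S^-$ with $S^+\cap S^-=\emptyset$ and $|S|\leq 2(|S(I)|-1)=2(2^{|X|}-1)$, together with a state-dependent utility $U:X\times S\to\mathbb{R}$ such that the functional in Equation~\eqref{eq:additivelastresult} represents $\kpref$. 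This is the claimed representation.
\end{proof}
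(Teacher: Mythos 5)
Your proof is correct and follows exactly the paper's own route: the corollary is obtained by instantiating Theorem~\ref{prop:additivegeneralrespect} with the identity operator, which every preference trivially respects and whose topology $S(I)=2^X$ gives the bound $2(2^{|X|}-1)$. Nothing is missing.
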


\begin{remark}
The representation constructed in the proof of
Theorem~\ref{prop:additivegeneralrespect} is not a minimal additive
representation. For example, consider a one-to-one utility function $U:X\to\mathbb{R}$, which induces a preference ordering $\kpref$ over the set of menus by means of \[ 
A \kpref B \text{ if and only if } \max\{U(x):x\in A \}\geq  \max\{U(x):x\in B \}.
\] It should be clear that the associated strategically rational operator has a topology of size $X$. Our construction in Proposition~\ref{prop:additivegeneralrespect}, however, generates a representation with at least $2(|X|-1)$ subjective states. 
\end{remark}

\section{Discussion}\label{sec:discussion}

We discuss the ideas behind our definitions of complexity through a
series of examples. This discussion will also lead up to a proof of
the minimal numbers of weak orders, and binary classifiers, needed to
represent a given closure operator.

We start with a simple example that illustrates the definitions in the paper:

\begin{example}\label{ex:exampleofcomplexity}
Let $X=\{a,b,c,d\}$ be the set of data points. Consider the classifiers $f_1,f_2$ defined by:
\begin{align*}
    S(f_1)&=\{\emptyset,\{a\},\{b\},\{a,b,c,d\}\}\\ S(f_2)&=\{\emptyset,\{a\},\{a,b\},\{a,b,c,d\}\}
\end{align*}

\begin{figure}
\centering
\begin{tikzpicture}[scale=0.3]
\matrix (A)[matrix of math nodes, column sep=0.6cm, row sep=0.5cm]{
&\boxed{\{a,b,c,d\}}\\
\boxed{\{a\}}& &\boxed{\{b\}}\\
& \boxed{\emptyset}\\};

\foreach \i/\j in {1-2/2-1, 1-2/2-3, 2-1/3-2, 2-3/3-2}
    \draw[<-] (A-\i) -- (A-\j);

\end{tikzpicture}
\qquad
\begin{tikzpicture}[scale=0.3]
\matrix (A)[matrix of math nodes, column sep=0.6cm, row sep=0.5cm]{
\boxed{\{a,b,c,d\}}\\
\boxed{\{a,b\}}\\
\boxed{\{a\}}\\
\boxed{\emptyset}\\};

\foreach \i/\j in {1-1/2-1, 2-1/3-1, 3-1/4-1}
    \draw[<-] (A-\i) -- (A-\j);

\end{tikzpicture}

\caption{The lattices associated with the closure operator $f_1$ (the left one) and $f_2$.}
\label{fig:exampletwobinary1}
\end{figure}
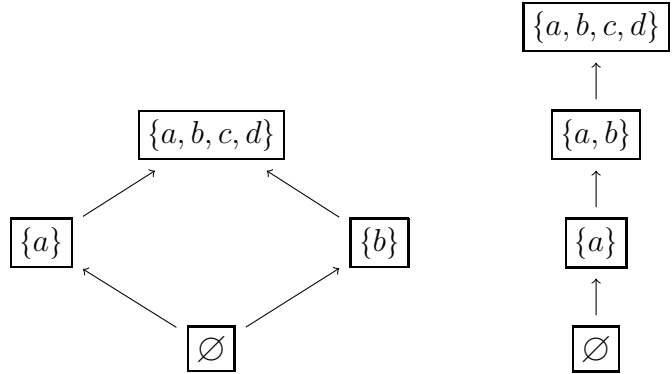

Figure~\ref{fig:exampletwobinary1}, illustrates the underlying lattice structures. 
The topology $S(f_1)$ has depth two and width two, while $S(f_2)$ has
depth three and width one. The number of
nonempty classes that $f_1$ or $f_2$ can detect is three. The MNWO of
$f_1$ is two, while that of $f_2$ is one.  The MNBC of both $f_1$ and
$f_2$ is two.

The calculations can be a lot more challenging. Consider the following classifiers $f_3$ and $f_4$:
\begin{align*}
    S(f_3)&=\{\emptyset,\{a\},\{a,b\},\{a,c\},\{a,b,c,d\}\}\\
    S(f_4)&=\{\emptyset,\{a\},\{a,b\},\{a,b,c\},\{a,b,c,d\}\}
\end{align*}

\begin{figure}
\centering

\begin{tikzpicture}[scale=0.3]
\matrix (A)[matrix of math nodes, column sep=0.6cm, row sep=0.5cm]{
&\boxed{\{a,b,c,d\}}\\
\boxed{\{a,b\}}& &\boxed{\{a,c\}}\\
&\boxed{\{a\}}\\
& \boxed{\emptyset}\\};

\foreach \i/\j in {1-2/2-1, 1-2/2-3, 2-1/3-2, 2-3/3-2, 3-2/4-2}
    \draw[<-] (A-\i) -- (A-\j);

\end{tikzpicture}
\qquad
\begin{tikzpicture}[scale=0.3]
\matrix (A)[matrix of math nodes, column sep=0.6cm, row sep=0.5cm]{
\boxed{\{a,b,c,d\}}\\
\boxed{\{a,b,c\}}\\
\boxed{\{a,b\}}\\
\boxed{\{a\}}\\
\boxed{\emptyset}\\};

\foreach \i/\j in {1-1/2-1, 2-1/3-1, 3-1/4-1, 4-1/5-1}
    \draw[<-] (A-\i) -- (A-\j);

\end{tikzpicture}

\caption{The lattices associated with the closure operator $f_3$
  (left) and $f_4$ (right).}
\label{fig:exampletwobinary2}
\end{figure}
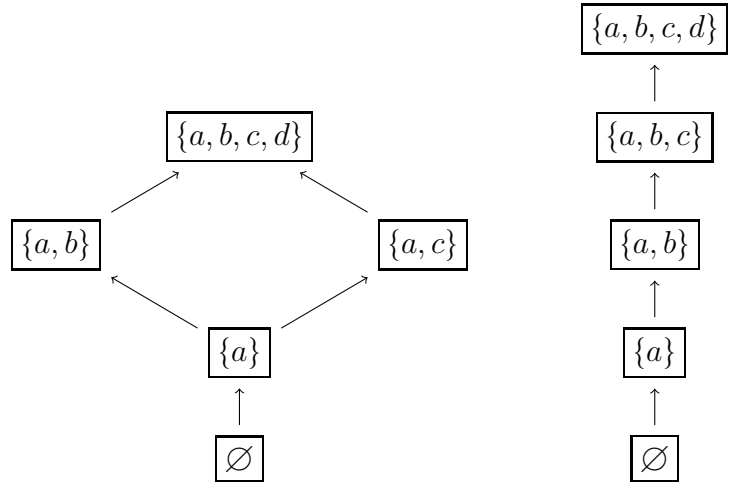

Figure~\ref{fig:exampletwobinary2}, illustrates the associated
lattices. The width, depth, and the number of classes can easily be
seen. The classifier $f_3$ has depth three, width two, and four
nonempty classes. Similarly, the classifier $f_4$ has depth four,
width one, and four nonempty classes.  Two simple classifiers can
generate the classifier $f_3$, and the classifier $f_1$ can be
generated by one simple classifier. 

However, when it comes to the decomposition into the binary
classifiers, we need to be more careful. The classifier $f_4$ can be generated with three binary classifiers. It is not hard to see that the following classifiers are the only minimal representations of $f_4$ through binary classifiers:
\begin{align*}
    S(f_{4,1})&=\{\emptyset,\{a\},\{a,b,c,d\}\}\\
    S(f_{4,2})&=\{\emptyset,\{a,b\},\{a,b,c,d\}\}\\
    S(f_{4,3})&=\{\emptyset,\{a,b,c\},\{a,b,c,d\}\}
\end{align*}

Now consider $f_3$. We might, at first, think that we need at least
three binary classifiers to generate $f_3$. However, perhaps surprisingly, in this case, we only need two binary classifiers. We define the classifiers $f_{3,1},f_{3,2}$ as follows:
\begin{align*}
    S(f_{3,1})&=\{\emptyset,\{a,b\},\{a,b,c,d\}\}\\
    S(f_{3,2})&=\{\emptyset,\{a,c\},\{a,b,c,d\}\}
\end{align*}

By the second condition in Proposition~\ref{thm:comborepresentation}, since $\{a\}=\{a,b\}\cap\{a,c\}$, we can see that $f_{3,1},f_{3,2}$ generate $f_3$. \emph{Hence, if we think about MNBC as a notion of complexity, then $f_4$ is more complex than $f_3$.} 
\end{example}

\begin{remark}\label{rmk:widthMNSC}
  The MNWO of an operator $f$ is bounded by the width of the topology
  $S(f)$, but may be strictly smaller.
\end{remark}
Our next example illustrates Remark~\ref{rmk:widthMNSC}.

\begin{example}\label{ex:widthisnotmin}
Let $X=\{a,b,c\}$ be a set of data points. We define the classifier $f$ as follows:

$$S(f)=\{\emptyset , \{a\}, \{b\}, \{c\}, \{a,b\}, \{b,c\},\{a,b,c\}\}$$

\begin{figure}
\centering
\begin{tikzpicture}
\matrix (A)[matrix of math nodes, column sep=0.6cm, row sep=0.5cm]{
 &  & \boxed{\{a,b,c\}}\\
&\boxed{\{a,b\}}&&\boxed{\{b,c\}}\\
\boxed{\{a\}}& &\boxed{\{b\}}& &\boxed{\{c\}}\\
& & \boxed{\emptyset}\\};

\foreach \i/\j in {1-3/2-2, 1-3/2-4, 2-2/3-1, 2-2/3-3, 2-4/3-3, 2-4/3-5, 3-1/4-3, 3-3/4-3, 3-5/4-3}
    \draw[<-] (A-\i) -- (A-\j);

\end{tikzpicture}

\caption{The lattice associated with the classifier $f$.}
\label{fig:examplewidthnotmin}
\end{figure}
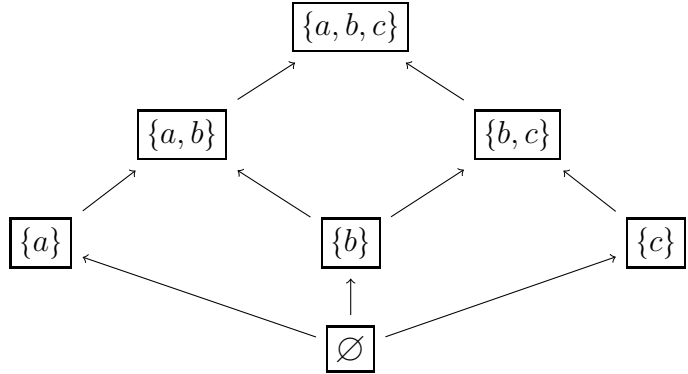

Figure~\ref{fig:examplewidthnotmin} illustrates the lattice associated with $f$. The antichain with classes $\{a\},\{b\},\{c\}$ is the largest antichain in the lattice. Therefore, the width of the lattice $S(f)$ is three. As a result of the Dilworth's Theorem, one can decompose the lattice to three chains. For example, the three chains $C_1=\{\emptyset,\{a\},\{a,b\},\{a,b,c\}\}$, $C_2=\{\{c\},\{b,c\}\}$, and $C_3=\{\{b\}\}$. Attaching $\emptyset$ and $\{a,b,c\}$ to each of the three chains implies the following three simple classifiers:

\begin{align*}
    S(f_{C_1})&=\{\emptyset,\{a\},\{a,b\},\{a,b,c\}\}\\
    S(f_{C_2})&=\{\emptyset,\{c\},\{b,c\},\{a,b,c\}\}\\
    S(f_{C_3})&=\{\emptyset,\{b\},\{a,b,c\}\}
\end{align*}

However, we can generate $f$ by only two simple classifiers:
\begin{align*}
    S(f_1)&=\{\emptyset,\{a\},\{a,b\},\{a,b,c\}\}\\
    S(f_2)&=\{\emptyset,\{c\},\{b,c\},\{a,b,c\}\}
\end{align*}

By using the second condition of
Proposition~\ref{thm:comborepresentation}, since
$\{b\}=\{a,b\}\cap\{b,c\}$ then $f_1,f_2$ can generate $f$. Therefore,
the MNWO is only two, while the width of $S(f)$ is three. 
\end{example}

MNWO may be smaller than width when a class is the same as the
intersection of some other classes detected by some simple
classifiers. The first class may be detected without needing to
explicitly add it to a simple classifier.

\begin{example}\label{ex:revisitepriovousexample}
Consider Example~\ref{ex:widthisnotmin}. The set $P(f)$ is obtained by removing the classes $\emptyset$ and $\{b\}$ from $S(f)$. Figure~\ref{fig:posetobtained} illustrates the structure of $P(f)$. The width of $P(f)$ is two, which is the same as the MNWO of $f$.

\begin{figure}
\centering
\begin{tikzpicture}
\matrix (A)[matrix of math nodes, column sep=0.6cm, row sep=0.5cm]{
 &  & \boxed{\{a,b,c\}}\\
&\boxed{\{a,b\}}&&\boxed{\{b,c\}}\\
\boxed{\{a\}}& && &\boxed{\{c\}}\\
& & \\};

\foreach \i/\j in {1-3/2-2, 1-3/2-4, 2-2/3-1, 2-4/3-5}
    \draw[<-] (A-\i) -- (A-\j);

\end{tikzpicture}

\caption{The lattice associated with $P(f)$.}
\label{fig:posetobtained}
\end{figure}
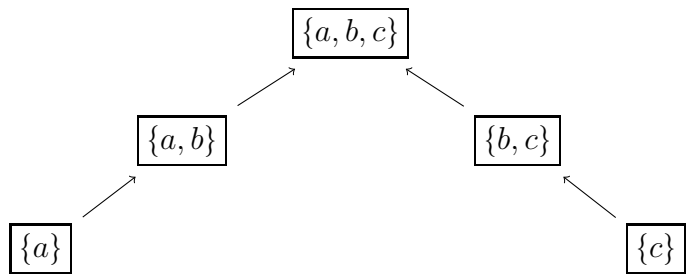
\end{example}

\section{Proofs}\label{sec:proofs}

\subsection{Proof of Proposition~\ref{prop:labelingclassifier}.}\label{sec:proofofproplabelingclassifier}

We have already proved the first part in remark \ref{rem:classifierisclosureoperator}. 
To prove the second part, we define $L=S(f)$ and $\Phi: X\rightrightarrows L$ with $\Phi(x)=\{s|\ s\in S(f), x\in s \}$. Based on the proof of the first part, we know that the classifier $g:2^X\to 2^X $ associated with $\Phi$ is a closure operator. Thus, we only need to prove that $g=f$.

First, we prove that $g(A)\subseteq f(A)$ for every $A\in 2^X$. Let $A\in 2^X$ and $x\in g(A)$. By the definition of $g$,  $\bigcap_{y\in A}\Phi(y)\subseteq \Phi(x)$. Then, by definition of $\Phi$, we have $\{ s\in S(f)|\  \forall y\in A, y\in s \}\subseteq \{s\in S(f)| \ x\in s \}$. Hence, if $s\in S(f)$ with $A \subseteq s$, then $x\in s $. 

Now, consider the set $s=f(A)$. By monotonicity of the closure operator we have $A\subseteq s$. Since $s\in S(f)$ and $A \subseteq s$, then we obtain that $x\in s$. This means that $x\in f(A)$. Thus, $g(A)\subseteq f(A)$ for every $A\in 2^X$.

For the other side, we need to show that $f(A)\subseteq g(A)$ for every $A\in 2^X$. Let $A\in 2^X$ and $x\in f(A)$. By the definition of $\Phi$ and $g$, it remains to show that $\bigcap_{y \in A} \Phi(y)=\{s\in S(f)| A\subseteq s \}\subseteq \Phi(x)=\{ s\in S(f)| \ x\in s \}$. Thus, it remains to show that if $s\in S(f)$ and $A\subseteq s$, then we would have $x\in s$. 
However, $f$ is monotonic and $s$ is a closed set respect to $f$. Thus, since $A\subseteq s$ then, $f(A)\subseteq f(s)=s$. 
Moreover, we assumed that $x\in f(A)$. As a result, we have $x\in s$.

\subsection{Proof of Theorem~\ref{thm:comborepresentation}}
Suppose first that $f(A)=\bigcap_{i\in \{1,\ldots,k\}} g_i(A)$. 
Extensivity and monotonicity of $f$ come from extensivity and monotonicity of each of $g_1,\ldots,g_k$. 
For the idempotence property, we need to show that $f(f(A))=f(A)$. By monotonicity of $f$, we only need to show that $f(f(A))\subseteq f(A)$.

Let $x$ be in $f(f(A))$. By the definition of $f$, for all $i\in \{1,\ldots,k\}$ $x\in g_i(f(A))$. Again by the definition of $f$, for all $i\in \{1,\ldots,k\}$ $x\in g_i(\bigcap_{j\in \{1,\ldots,k\}} g_j(A))$. Since $\bigcap_{j\in \{1,\ldots,k\}} g_j(A)\subseteq g_i(A)$ and $g_i$ is monotonic, then $g_i(\bigcap_{j\in \{1,\ldots,k\}} g_j(A))\subseteq g_i(g_i(A))$. But, since $g_i$ is a closure operator, then $g_i(g_i(A))=g_i(A)$. As a result, $g_i(\bigcap_{j\in \{1,\ldots,k\}} g_j(A))\subseteq g_i(A)$. Thus, for all $i\in \{1,\ldots,k\}$ we have $x\in g_i(A)$, which means that $x\in \bigcap_{j\in \{1,\ldots,k\}} g_j(A)$. Again by the definition of $f$, the last result shows that $x\in f(A)$. Hence, $f(f(A))\subseteq f(A)$, which completes the proof.

Now we turn to the second statement in the theorem. So let $f, g_1,\ldots,g_k$ be  closure operators on $X$. First, we show that if $f$ is generated by closure operators $g_1,\ldots,g_k$, then the two conditions in the theorem hold.

To prove the first condition, observe that $f$ is a closure operator
by the first part of the proof. Let $g_i\in\{g_1,\ldots,g_k\}$ and $A\in S(g_i)$. Since every $g_j\in \{g_1,\ldots,g_k\}$ is a closure operator, then $g_j$ is monotonic. Hence, for every $g_j$, $A\subseteq g_j(A)$. Moreover, since $A\in S(g_i)$, then $g_i(A)=A$. As a result, $A=\bigcap_{j\in \{1,\ldots,k\}} g_j(A)$, which means that $A=f(A)$. Thus, $A\in S(f)$. Hence, for every $i\in \{1,\dots,k\}$ we have $S(g_i)\subseteq S(f)$.

To prove the second condition, let $A\in S(f)$ and $x\notin A$. By the definition of $f$, $x\notin A=f(A)=\bigcap_{j\in \{1,\ldots,k\}} g_j(A)$. Therefore, there exists some $i\in \{1,\ldots,k\}$ such that $x\notin g_i(A)$. Thus, we complete the proof of the second condition.

Conversely, suppose now that  both conditions are satisfied. We proceed to show that $f(A)=\bigcap_{j\in \{1,\ldots,k\}} g_j(A)$.

By the first condition, for all $i\in \{1,\ldots,k\} $ $S(g_i)\subseteq S(f)$. Hence, for every $A\in 2^X$ and for every $i\in \{1,\ldots,k\}$ we have $A\subseteq g_i(A)\subseteq f(A)$. Therefore, $A\subseteq \bigcap_{j\in \{1,\ldots,k\}} g_j(A)\subseteq f(A)$. Since $S(f)$ is closed under intersection and $g_i(A)\subseteq\ S(g_i)\subseteq S(f)$, we also have $\bigcap_{j\in \{1,\ldots,k\}} g_j(A)\in S(f)$.

Since $f$ is a closure operator, it is monotonic. Applying $f$ to all terms of $A\subseteq \bigcap_{j\in \{1,\ldots,k\}} g_j(A)\subseteq f(A)$, gives us $f(A)\subseteq f(\bigcap_{j\in \{1,\ldots,k\}} g_j(A))\subseteq f(f(A))$. By idempotence property of $f$, we have $f(A)\subseteq f(\bigcap_{j\in \{1,\ldots,k\}} g_j(A))\subseteq f(A)$, which results in $f(A)= f(\bigcap_{j\in \{1,\ldots,k\}} g_j(A))$. But we have already shown that $\bigcap_{j\in \{1,\ldots,k\}} g_j(A)\in S(f)$. Therefore, $f(A)=f(\bigcap_{j\in \{1,\ldots,k\}} g_j(A))=\bigcap_{j\in \{1,\ldots,k\}} g_j(A)$. Thus, we show that $f(A)=\bigcap_{j\in \{1,\ldots,k\}} g_j(A)$. The last result completes the proof.

\subsection{Proof of Proposition~\ref{prop:complexityofoperators}}

Let a closure operator $f$ be generated by a set of closure operators $\{g_{\succsim_1}, \ldots, g_{\succsim_n}\}$. Consider a closed set $A\in S(f)$. If $A$ is not an intersection of other elements of $S(f)$, then by Theorem~\ref{thm:comborepresentation} there should be at least one $g_{\succsim_i}$ such that $A\in S(g_{\succsim_i})$. Moreover, if a set of closure operators $\{g_{\succsim_1}, \ldots, g_{\succsim_n}\}$ has the following two properties. First, $S(g_{\succsim_i})\subseteq S(f)$. Second, they can detect all the closed sets of $S(f)$ that are not the intersections of other closed sets of $S(f)$. Then, $f$ can be generated by $\{g_{\succsim_1}, \ldots, g_{\succsim_n}\}$.

As a result, we remove all the classes that are the intersections of some other classes from the lattice $S(f)$. In other words, in the lattice, $S(f)$, every class with an out-degree of more than two will be removed. The remaining set is a partially ordered set with respect to the set inclusion. Note that different weak orders should generate the elements of any anti-chain. However, by Dilworth's Theorem, the reverse is also true; the minimum number of weak orders that can cover the lattice is the length of the largest anti-chain. Therefore, the MNWO should be the width of the remaining partially ordered set.

The same observation and technique can be applied to find the minimum number of binary classifiers needed to generate a given closure operator. However, since both $\emptyset$ and $X$ are generated through any binary classifier, we can remove them from $S(f)$. Every other class in the remaining partially ordered set $B(f)$ should be contained in one of the binary classifiers.

\subsection{Proof of Proposition~\ref{lem:weakorderlattice}}
If $\succsim$ generates $f$, then Theorem~\ref{prop:representationbyweakorders} guaranties that $f$ is a closure operator. By the definition of $f$, the set of closed sets is $S(f_\succsim)=\{H(\succsim,h_A)|\ A\subseteq X\}$. Using the definition of support half-space, $S(f_\succsim)$ is a single chain respect to the set inclusion.

For the other side. If $f$ is a closure operator such that $S(f)$ is a single chain, then define $a\succsim b$ if and only if $f(a)\supseteq f(b)$. Since $S(f)$ is a total order, $\succsim$ is a weak order.

To show that $\succsim$ generates $f$, we need to show that for every $A\in 2^X$ we have $f(A)=H(\succsim,h_A)$.
Notice that there should be some $x\in A$ with $f(x)=f(A)$. Otherwise, since $S(f_\succsim)$ is a single chain, there should be a proper subset of $f(A)$ which contains all the closure of the singleton subsets of $A$, which is not correct. 

Now, consider any $y\in f(A)$. Since $f(y)\subseteq f(x)$, then $x\succsim y$. Thus, $y\in H(\succsim,h_A)$. As a result, $f(A)\subseteq H(\succsim,h_A)$. 

For the other side, since $f(x)=f(A)$, then $x$ should be a maximal element in $H(\succsim,h_A)$. Hence,  for all $y\in H(\succsim,h_A)$ we have $f(y) \subseteq f(x)=f(A)$. As a result $H(\succsim,h_A)\subseteq f(A)$. The last result completes the proof.

\subsection{Proof of Theorem~\ref{prop:additivegeneralrespect} }

We use the M\"{o}bius inversion formula to prove the result. Appendix~\ref{appendix:mobius} explains the technique in more detail. 

Consider the lattice $S(f)$. Define the partial order $\succsim$ over $S(f)$ by reversing the partial order induced by the set inclusion. In other words, $A\succsim B$ if and only if $A\subseteq B$. We can check that the meet and join of the lattice $S(f)$ will be swapped.

Since $\kpref$ is a transitive binary relation, we can extend it to a weak order over the finite set $2^X$. Then there should be a representation by real-valued functions. Consider any utility function $U:2^X\to\mathbb{R}$ that represents $\kpref$. 

We define the M\"{o}bius operator $\Phi:(S(f))^\mathbb{R}\to (S(f))^\mathbb{R}$ as follows:
\begin{equation}
    \Phi(f)(A)=\sum\limits_{\substack{A\succsim B \\ B\in S(f)}} U(B).
\end{equation}

M\"{o}bius inversion formula guarantees that the M\"{o}bius operator is bijective and the inverse is $\Phi^{-1}(g)(A)=\sum_{A\succsim B} \mu(B,A)g(B)$, where $\mu$ is the M\"{o}bius function.

As a result, if we define the function $h:S(f)\to \mathbb{R}$, for every $A\in S(f)$ as:
\begin{equation}
    h(A)=\sum\limits_{\substack{A\succsim B \\ B\in S(f)}} \mu(B,A)U(B) ,
\end{equation}

Then for every $A\in S(f)$, $U$ can be retrieved as follows:
 \begin{equation}
     U(A)=\sum\limits_{\substack{A\succsim B \\ B\in S(f)}}h(B).
 \end{equation}
 
However, $A\succsim B$ if and only if $A\subseteq B$. Therefore, we have
\begin{equation}
      U(A)=\sum\limits_{\substack{A\subseteq B \\ B\in S(f)}}h(B).
\end{equation}

Notice that the above equation is only correct for $A\in S(f)$.  However, since $\kpref$ respects $f$ and $U$ represents $\kpref$, then for every $A\in 2^X$ we have $U(A)=U(f(A))$. Therefore, for every $A\in 2^X$, since $f(A)\in S(f)$, we have
\begin{equation}
      U(A)=U(f(A))=\sum\limits_{\substack{f(A)\subseteq B \\ B\in S(f)}}h(B).
\end{equation}

Note that, since f is a closure operator, then $A\subseteq B$ if and only if $f(A)\subseteq B$ for every $A\in 2^X$ and every $B\in S(f)$. Therefore, for every $A\in 2^X$ we have

\begin{align}
     U(A)=\sum\limits_{\substack{A\subseteq B \\ B\in S(f)}}h(B).
\end{align}

We define $h^+(B)=max(0,h(B))$ and $h^-(B)=max(0,-h(B))$. Since $h=h^+-h^-$, then we have

\begin{equation}\label{eq:easyrep}
      U(A)=\sum\limits_{\substack{A\subseteq B \\ B\in S(f)}}h^+(B)-\sum\limits_{\substack{A\subseteq B \\ B\in S(f)}}h^-(B).
\end{equation}
 
By comparing the above equation and Equation~\ref{eq:additivelastresult}, we only need to make some changes to make them equal. The trick is as follows.

We define functions $U^+,U^-:(X\times S(f)\setminus\emptyset) \to \mathbb{R}$ as follows:

\begin{align*}
    U^+(x,B)&=
    \begin{cases}
    -h^+(B) & \text{if}\  x\in B \\
    0 & \text{if}\  x\notin B 
    \end{cases}
    \\
     U^-(x,B)&=
    \begin{cases}
    -h^-(B) & \text{if}\  x\in B \\
    0 & \text{if}\  x\notin B 
    \end{cases}
\end{align*}

Now, consider any $A\in 2^X$ and $B\in S(f)\setminus\emptyset$. By our definition of $U^+,U^-$, we have

\begin{align*}
    \underset{a\in A}{max}\ U^+(a,B)&=
    \begin{cases}
     -h^+(B) & \text{if}\  A\subseteq B \\
    0 & \text{otherwise}
    \end{cases}
    \\
    \underset{a\in A}{max}\ U^-(a,B)&=
    \begin{cases}
     -h^-(B) & \text{if}\  A\subseteq B \\
    0 & \text{otherwise}
    \end{cases}
\end{align*}

As a result of the above observation, we get the following result:
\begin{align}\label{eq:easyfinalrep}
      U(A)&=\sum\limits_{\substack{A\subseteq B  \\ B\in S(f)}}h^+(B)-\sum\limits_{\substack{A\subseteq B \\ B\in S(f)}}h^-(B) \notag\\
      &=(\sum\limits_{\substack{A\subseteq B \\ B\in S(f)\setminus\emptyset}}-\underset{a\in A}{max}\ U^+(a,B))-(\sum\limits_{\substack{A\subseteq B \\ B\in S(f)\setminus\emptyset}}-\underset{a\in A}{max}\ U^-(a,B)) \notag\\
      &=(\sum\limits_{B\in S(f)\setminus\emptyset}-\underset{a\in A}{max}\ U^+(a,B))-(\sum\limits_{B\in S(f)\setminus\emptyset}-\underset{a\in A}{max}\ U^-(a,B))\notag \\
      &=-(\sum\limits_{B\in S(f)\setminus\emptyset}\underset{a\in A}{max}\ U^+(a,B))+(\sum\limits_{B\in S(f)\setminus\emptyset}\underset{a\in A}{max}\ U^-(a,B))\notag \\
      &=(\sum\limits_{B\in S(f)\setminus\emptyset}\underset{a\in A}{max}\ U^-(a,B))-(\sum\limits_{B\in S(f)\setminus\emptyset}\underset{a\in A}{max}\ U^+(a,B))
\end{align}

Equation \ref{eq:easyfinalrep} and  \ref{eq:additivelastresult} are similar except their indexes. To make them the same, we consider any two disjoint sets $S^+,S^-\subseteq \mathbb{N}$~\footnote{The choice of $\mathbb{N}$ is arbitrary. As long as $S^+,S^-$ are disjoint and each has $|S(f)|-1$ elements, our argument follows.}, with both have $|S(f)|-1$ elements. We consider any two bijection $index_1:S^+\to S(f)\setminus\emptyset,\ index_2:S^-\to S(f)\setminus\emptyset$. Let $S=S^+\cup S^-$. We define the function $U:X\times S\to\mathbb{R}$ as follows:

\begin{align*}
    U(x,s)&=\begin{cases}
    U^-(x,index_1(s)) &\text{if} \ s\in S^+\\
    U^+(x,index_2(s)) &\text{if} \ s\in S^-
    \end{cases}
\end{align*}

Then, using Equation~\ref{eq:easyfinalrep} and our definition of function $U$, we have the following result:

\begin{align}\label{eq:finalfinal}
    U(A)&=(\sum\limits_{B\in S(f)\setminus\emptyset}\underset{a\in A}{max}\ U^-(a,B))-(\sum\limits_{B\in S(f)\setminus\emptyset}\underset{a\in A}{max}\ U^+(a,B))\notag\\
    &=\sum_{s\in S^+}\underset{a\in A}{max}\ U(a,s)-\sum_{s\in S^-}\underset{a\in A}{max}\ U(a,s)
\end{align}
Equation~\ref{eq:finalfinal} finishes our proof.


\nocite{*}

\printbibliography


\appendix

\section{Appendix}
\subsection{M\"{o}bius Operator}\label{appendix:mobius}
The main technical tool in our results on menu choice is the M\"{o}bius inversion formula\footnote{For an application of the technique in Kreps' setting, check \cite{nehring}. For a complete study of the concept, see \cite{chateauneufandjaffray}.}. Let $(X,\succsim)$ be a finite partially ordered set. We define the \text{\emph{M\"{o}bius}} function $\mu:X\times X\to \mathbb{R}$ as follows: 

We set $\mu(x,y)=0$ whenever $y\nsuccsim x$, and $\mu(x,x)=1$ for all $x\in X$.  Then, define other values inductively as follows:
$$\mu(x,y)=-\sum_{y\succ z\succsim x} \mu(x,z).$$

By above definition, we have:
\begin{equation*}
\sum_{y\succsim z\succsim x} \mu(x,z) =
\begin{cases}
1 & \text{if } x=y,\\
0 & \text{otherwise}.
\end{cases}
\end{equation*}

Let $(X)^\mathbb{R}$ denote the set of all functions from $X$ to $\mathbb{R}$. Then, the \textbf{\emph{M\"{o}bius operator}} $\Phi:(X)^\mathbb{R}\to (X)^\mathbb{R}$ is defined by $\Phi(f)(x)=\sum_{x\succsim y} f(y)$. M\"{o}bius inversion formula guarantees that the M\"{o}bius operator is bijective and the inverse is $\Phi^{-1}(g)(x)=\sum_{x\succsim y} \mu(x,y)g(y)$.

\begin{theorem}(M\"{o}bius inversion formula)
Let $(X,\succsim)$ be a finite partially ordered set and $\mu$ be its M\"{o}bius function. Let $f,g:X\to\mathbb{R}$. Then
$$g(x)=\sum_{x\succsim y} f(y)$$
implies that
$$f(x)=\sum_{x\succsim y} \mu(y,x)g(y).$$

\end{theorem}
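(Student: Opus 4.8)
The plan is to prove the inversion by direct substitution, the only ingredient beyond the appendix being a convolution identity for the M\"obius function. Substituting the hypothesis (with $x$ renamed to $y$) $g(y)=\sum_{y\succsim z}f(z)$ into the claimed right-hand side and interchanging the two finite sums gives
\[
\sum_{x\succsim y}\mu(y,x)\,g(y)=\sum_{x\succsim y}\mu(y,x)\sum_{y\succsim z}f(z)=\sum_{x\succsim z}f(z)\Bigl(\sum_{x\succsim y\succsim z}\mu(y,x)\Bigr),
\]
so everything comes down to evaluating the inner sum. The identity recorded in the appendix, $\sum_{y\succsim z\succsim x}\mu(x,z)$ equals $1$ if $x=y$ and $0$ otherwise, holds the \emph{first} argument of $\mu$ fixed, whereas here the \emph{second} argument is fixed; so I first establish the companion identity
\begin{equation}\label{eq:mobiuscompanion}
\sum_{x\succsim y\succsim z}\mu(y,x)=\begin{cases}1&\text{if }x=z,\\0&\text{otherwise,}\end{cases}\qquad\text{for all }x\succsim z.
\end{equation}
(Equivalently, \eqref{eq:mobiuscompanion} says that $\mu$, a one-sided inverse of the zeta function in the incidence algebra of the finite poset $(X,\succsim)$, is also the other-sided inverse; but a direct induction is cleaner.)

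To prove \eqref{eq:mobiuscompanion} I would induct on the cardinality of the interval $[z,x]:=\{w:x\succsim w\succsim z\}$. If $[z,x]=\{x\}$ then $z=x$ and the sum is $\mu(x,x)=1$. Otherwise $x\succ z$; peeling off the $y=x$ term and rewriting each remaining $\mu(y,x)$ (for $x\succ y$) via the defining recursion read with $x$ as the top element, namely $\mu(y,x)=-\sum_{x\succ w\succsim y}\mu(y,w)$, yields
\[
\sum_{x\succsim y\succsim z}\mu(y,x)=1-\sum_{x\succ y\succsim z}\ \sum_{x\succ w\succsim y}\mu(y,w)=1-\sum_{x\succ w\succsim z}\ \sum_{w\succsim y\succsim z}\mu(y,w),
\]
where the last equality merely reorders the sum over the pairs $(y,w)$ satisfying $x\succ w\succsim y\succsim z$. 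For each $w$ with $x\succ w\succsim z$ the interval $[z,w]$ is a proper subset of $[z,x]$, so the induction hypothesis gives $\sum_{w\succsim y\succsim z}\mu(y,w)=1$ when $w=z$ and $0$ otherwise; the double sum therefore equals $1$ (only $w=z$ survives, and it lies in the range because $x\succ z$), and the whole expression is $1-1=0$.

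Granting \eqref{eq:mobiuscompanion}, the theorem follows from the first display: the interchange of the two sums is legitimate because $X$ is finite, the inner sum over $y$ ranges over a nonempty set only when $x\succsim z$ by transitivity of $\succsim$, and by \eqref{eq:mobiuscompanion} it equals $1$ if $z=x$ and $0$ otherwise, so only the term $z=x$ survives and $\sum_{x\succsim y}\mu(y,x)g(y)=f(x)$. I do not anticipate a genuine obstacle; the one delicate point is the reordering of the double sum in the proof of \eqref{eq:mobiuscompanion}, together with careful bookkeeping of which argument of $\mu$ is held fixed, since the convolution identity stated in the appendix is the transpose of the one this argument consumes. (Running the same argument with the appendix's identity in place of \eqref{eq:mobiuscompanion} gives the converse implication, and hence the bijectivity of the M\"obius operator asserted there.)
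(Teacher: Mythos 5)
Your proof is correct, and its skeleton --- substitute $g(y)=\sum_{y\succsim z}f(z)$, interchange the finite double sum, and evaluate the inner convolution $\sum_{x\succsim y\succsim z}\mu(y,x)$ --- is exactly the computation in the paper's proof. Where you genuinely add something is in the handling of that inner sum: the identity recorded in the appendix, $\sum_{y\succsim z\succsim x}\mu(x,z)$ equal to $1$ if $x=y$ and $0$ otherwise, fixes the \emph{first} (bottom) argument of $\mu$ and is immediate from the defining recursion, whereas the substitution argument needs the transposed identity $\sum_{x\succsim y\succsim z}\mu(y,x)$ equal to $1$ if $z=x$ and $0$ otherwise, in which the \emph{second} (top) argument is fixed. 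The paper passes over this distinction (the final ``$=f(x)$'' invokes the transposed version without proof), which is a small but real gap; your induction on the cardinality of the interval $[z,x]$ --- peeling off the $y=x$ term, expanding $\mu(y,x)=-\sum_{x\succ w\succsim y}\mu(y,w)$, and reordering over pairs with $x\succ w\succsim y\succsim z$ --- closes it cleanly, and your implicit use of antisymmetry to rule out $y=x$ inside that range is exactly what makes the reordering exact. So both arguments reach the theorem by the same route, but yours is self-contained: it establishes that $\mu$ is a two-sided inverse of the zeta function rather than assuming it, and, as you observe, running the same computation with the appendix's own identity yields the converse implication and hence the bijectivity of the M\"{o}bius operator asserted there.
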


\begin{proof}
 To be complete, we add the proof. 
 \begin{align*}
     \sum_{x\succsim y} \mu(y,x)g(y)&=\sum_{x\succsim y} \mu(y,x)\big( \sum_{y\succsim z} f(z)\big)\\
     &=\sum_{x\succsim y\succsim z} \mu(y,x)f(z)\\
     &=\sum_{x\succsim z}\big(\sum_{x\succsim y \succsim z} \mu(y,x)\big)f(z)=f(x).
 \end{align*}
 \end{proof}

\end{document}
\endinput